\documentclass[12pt]{article}

\usepackage{pgf}

\usepackage[english]{babel}
\usepackage[utf8]{inputenc}

\usepackage{amsmath,amssymb}
\usepackage{geometry}
\usepackage{array}
\usepackage{xspace}

\usepackage{amsmath,amsfonts,amssymb,amsopn,amsthm}
\theoremstyle{plain}

\newtheorem{lemma}{Lemma}

\newtheorem{proposition}{Proposition}

\newtheorem{theorem}{Theorem}
\theoremstyle{definition}

\theoremstyle{remark}
\newtheorem{remark}{Remark}

% Sonderbuchstaben mit Doppellinie

\newcommand{\C}{{\mathbb C}}

\newcommand{\F}{{\mathbb F}}

\newcommand{\Z}{{\mathbb Z}}
 
 \newcommand{\sN}{{\mathcal N}}
\newcommand{\sL}{{\mathcal L}}

%%%%%%%%%%%%%%%%%%%%%%%%%%%%%%%%%%%%%%%
\newcommand{\be}{\begin{eqnarray}}
\newcommand{\ee}{\end{eqnarray}}
\newcommand{\nn}{{\nonumber}}

\newcommand{\Tr}{{\rm Tr}}

\makeatother

\newcommand{\Wa}[1]{\widehat{\chi_{#1}}}
\newcommand{\Supp}{{\rm Supp}}

\begin{document} 

\title{Strongly regular graphs from weakly regular plateaued functions}

\author{Sihem Mesnager{\thanks{Department of Mathematics, University of Paris VIII, University of Paris XIII, CNRS, UMR 7539 LAGA and Telecom ParisTech, Paris, France.
Email: smesnager@univ-paris8.fr}} \and Ahmet S{\i}nak{\thanks{ 
Department of Mathematics and Computer Sciences at Necmettin Erbakan University, Turkey and LAGA, UMR 7539, CNRS at Universities
of Paris VIII and Paris XIII, France.
 Email: sinakahmet@gmail.com}}}

\date{}

\maketitle

\begin{abstract}

The paper provides the first constructions of strongly regular graphs  and association
schemes from weakly regular plateaued functions over finite fields of odd characteristic.
We generalize the construction method of strongly regular graphs from weakly regular bent functions given by Chee et al. in [Journal of Algebraic Combinatorics, 34(2), 251-266, 2011] to weakly regular plateaued functions. In this framework, we   construct strongly regular graphs with three types of parameters  from weakly regular plateaued functions with some homogeneous conditions. We also construct a family of association schemes of class $p$ from weakly regular $p$-ary plateaued functions.

 \end{abstract}
{\bf Keywords} Association schemes,  Partial difference sets,  Strongly regular graphs,  Weakly regular plateaued functions.

\section{Introduction}
Certain combinatorial objects such as  association schemes and strongly regular graphs have a diverse applications in many areas, especially coding theory and cryptography. 
It was shown in a few paper \cite{chee2011strongly,feng2010partial,pott2011association,tan2010strongly} that   these combinatorial objects 
%such as strongly regular graphs and association schemes 
may be  constructed from  weakly regular bent functions over finite fields of  odd characteristic. With in this framework,  we make use of in this paper for the first time the weakly regular plateaued functions to construct association schemes and  strongly regular graphs.
The paper is organized as follows. The rest of this section settles the necessary notations and  background of this paper. In Section \ref{SectionConstruction}, we first  give a family of   $p$-class association schemes  from weakly regular  plateaued functions over finite fields of   characteristic $p$.  We next
construct  three types of strongly regular graphs  from weakly  regular plateaued functions over finite fields of odd characteristic. %, which   directly produce  $3$-class association schemes.

\subsection{Difference Sets}
Group rings and character theory are necessary tools to handle difference sets. The reader is referred  to \cite{passman2011algebraic} for the group rings,  to \cite{lidl1997finite} for the character theory  on finite fields and to  \cite{ma1994survey} for  difference sets.

Let $G$ be a (multiplicative) group of order $v$. A $k$-subset $D$ of $G$ is called a $(v,k,\lambda)$ \textit{difference set} if each non-identity element of $G$ can be represented as $gh^{-1}$ $(g,h\in D, g\neq h)$ in exactly $\lambda$ ways.    Then,  $D$ is a $(v,k,\lambda)$  difference set in $G$ if and only if the following equation holds in the group ring $\Z[G]$:
\be\nn
 DD^{(-1)}=(k-\lambda)1_G+ \lambda G,
\ee
where $D=\sum_{d\in D}d$, $D^{(-1)}=\sum_{d\in D}d^{-1}$  and $G=\sum_{g\in G}g$. Moreover, 
%Let $G$ be a multiplicative group of order $v$. 
A $k$-subset $D$ of $G$ is  called a $(v,k,\lambda,\mu)$ \textit{partial difference set (PDS)} if each non-identity element in $D$ (resp., $G\setminus D$) can be represented as $gh^{-1}$ $(g,h\in D, g\neq h)$ in exactly $\lambda$ (resp., $\mu$) ways.
%, and each non-identity element in $G\setminus D$ can be represented as $gh^{-1}$ $(g,h\in D, g\neq h)$ in exactly $\mu$ ways.
 It is  usually assumed that the identity element $1_G$ of $G$ is not involved in $D$.   By the group ring notation, 
 $D$   is a $(v,k,\lambda,\mu)$ PDS in $G$ if and only if 
\be\label{Condition}
DD^{(-1)}=(k-\mu)1_G+(\lambda-\mu)D+\mu G.
\ee
For  $\lambda\ne \mu$, we have  $D^{(-1)}=D$. A well-known example of PDS is the Paley PDS. \\
Strongly regular graphs (SRGs) are the certain combinatorial  objects associated with partial difference sets. Given a PDS $D$ with $\lambda\neq \mu$  in $G$, one can construct a strongly regular Cayley graph, $Cay(G,D)$, whose vertex set is $G$  and two vertices $g,h$ are joined by an edge if and only if $gh^{-1}\in D$. 
The following proposition states the connection between SRGs and PDSs.

\begin{proposition}\cite{ma1994survey}
A $k$-subset $D$ of $G$ is a $(v,k,\lambda,\mu)$-PDS in $G$ with $1_G\notin D$ and $D^{(-1)}=D$ if and only if a Cayley graph $Cay(G,D)$  is a $(v,k,\lambda,\mu)$ SRG.
\end{proposition}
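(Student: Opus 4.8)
The plan is to prove both implications by turning them into statements about the group‑ring element $DD^{(-1)}$. Introduce $r_D(x):=\#\{(g,h)\in D\times D: gh^{-1}=x\}$, so that, expanding the product in $\Z[G]$,
\[
DD^{(-1)}=\Bigl(\textstyle\sum_{g\in D}g\Bigr)\Bigl(\textstyle\sum_{h\in D}h^{-1}\Bigr)=\sum_{x\in G}r_D(x)\,x .
\]
Thus the PDS condition \eqref{Condition} is precisely the statement that $r_D(1_G)=k$, $r_D(x)=\lambda$ for $x\in D$, and $r_D(x)=\mu$ for $x\in G\setminus(D\cup\{1_G\})$ (using $1_G\notin D$ to read off the coefficient of $1_G$ as $(k-\mu)+\mu=k$). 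The whole proof then rests on one elementary dictionary between $r_D$ and the graph $\Gamma:=Cay(G,D)$.

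I would first record the structure of $\Gamma$. For $g\in G$ the vertices adjacent to $g$ are exactly those $h$ with $hg^{-1}\in D$, i.e.\ the right coset $Dg$; hence $\Gamma$ is a loopless, undirected, $|D|$‑regular graph on $v=|G|$ vertices precisely when $1_G\notin D$ and $D^{(-1)}=D$. The only computation is the common‑neighbour count: for distinct $g,h$, a common neighbour $w$ is an element with $w=ag=bh$ for some $a,b\in D$, and $w\leftrightarrow(a,b)$ is a bijection onto $\{(a,b)\in D\times D: b^{-1}a=hg^{-1}\}$. So the number of common neighbours of $g$ and $h$ is the coefficient of $hg^{-1}$ in $D^{(-1)}D$. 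Since $D^{(-1)}=D$ forces $D^{(-1)}D=D\,D=DD^{(-1)}$, that coefficient equals $r_D(hg^{-1})$.

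For the direction ``$\Rightarrow$'', assume $D$ is a $(v,k,\lambda,\mu)$‑PDS with $1_G\notin D$ and $D^{(-1)}=D$. By the previous paragraph $\Gamma$ is a well‑defined $k$‑regular graph on $v$ vertices, and distinct $g,h$ have $r_D(hg^{-1})$ common neighbours. If $g\sim h$ then $hg^{-1}\in D$, so this number is $r_D(hg^{-1})=\lambda$; if $g\not\sim h$ (and $g\neq h$) then $hg^{-1}\in G\setminus(D\cup\{1_G\})$, so it is $\mu$. Hence $\Gamma$ is a $(v,k,\lambda,\mu)$ SRG. For ``$\Leftarrow$'', suppose $Cay(G,D)$ is a $(v,k,\lambda,\mu)$ SRG. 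Being a simple undirected graph forces $1_G\notin D$ and $D^{(-1)}=D$, and $k$‑regularity at the vertex $1_G$ (whose neighbourhood is $D$) gives $|D|=k$. Applying the common‑neighbour count with $g=1_G$: $r_D(x)=\lambda$ for $x\in D$ (then $1_G\sim x$), $r_D(x)=\mu$ for $x\in G\setminus(D\cup\{1_G\})$ (then $1_G\not\sim x$), and $r_D(1_G)=|D|=k$. Substituting into $DD^{(-1)}=\sum_x r_D(x)x$ gives $DD^{(-1)}=k\,1_G+\lambda D+\mu(G-D-1_G)=(k-\mu)1_G+(\lambda-\mu)D+\mu G$, which is exactly \eqref{Condition}; so $D$ is a $(v,k,\lambda,\mu)$‑PDS.

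There is no deep step here; the only thing demanding care — the ``main obstacle'' — is bookkeeping with left versus right multiplication: one must fix the convention $g\sim h\iff hg^{-1}\in D$, note that neighbourhoods are the right cosets $Dg$, and check that the common‑neighbour count yields $b^{-1}a$, hence a coefficient of $D^{(-1)}D$, which collapses to $r_D$ only because $D^{(-1)}=D$. Keeping the hypotheses $1_G\notin D$ and $D^{(-1)}=D$ explicitly in play throughout is essential, since these are exactly what makes ``$D$ defines a simple graph'' equivalent to ``$D$ is an arbitrary $k$‑subset''; everything else is direct substitution into \eqref{Condition}.
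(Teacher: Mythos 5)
Your proof is correct. The paper does not prove this proposition at all --- it is quoted from Ma's survey without argument --- so there is no in-paper proof to compare against; your argument is the standard one, translating the coefficients of $DD^{(-1)}$ in $\Z[G]$ into common-neighbour counts of the Cayley graph, and it handles the only delicate points (the role of $1_G\notin D$ and $D^{(-1)}=D$ in making the graph simple and undirected, and the left/right bookkeeping that turns the common-neighbour count into a coefficient of $D^{(-1)}D=DD^{(-1)}$) correctly in both directions.
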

% SRGs (or PDSs) with parameters $(v,k,\lambda,\mu)=(n^2,r(n-\epsilon), \epsilon n+r^2-\epsilon 3r, r^2-\epsilon r)$ 
%are said to be of \textit{Latin square type} if  $\epsilon=1$, and  of \textit{negative Latin square type} if $\epsilon=-1$. 

\subsection{Association schemes and Schur rings}

  Let $V$ be a finite set of vertices and $\{G_0,G_1, \ldots, G_d\}$ be binary relations on $V$ with  $G_0=\{(x,x) : x\in V\}$. Then the decomposition $(V; G_0,G_1,\ldots,G_d)$ (shortly, it may be represented as  $(V,\{G_i\}_{0\leq i\leq d})$) is called \textit{an association scheme} of class $d$ on $V$ provided that  the following properties hold:
\begin{itemize}
\item [-] $V\times V=G_0 \cup G_1 \cup \cdots \cup G_d$ and $G_i\cap G_j=\emptyset$ for $i\neq j$;
\item [-] $^t G_i=G_{i'}$ for some $i'\in\{0,1,\ldots,d\}$, where $^t G_i =\{(x,y) : (y,x)\in G_i\}$; (If $i'=i$, then we call $G_i$ symmetric.)
\item [-] for $i,j,k\in\{0,1,\ldots,d\}$ and $x,y\in V$ with $(x,y)\in G_k$,  the number $p_{ij}^k:=\#\{z\in V : (x,z)\in G_i, (z,y)\in G_j\}$ is  a constant.
\end{itemize}

An association scheme is said to be \textit{symmetric} if each $G_i$ is symmetric. %  A SRG and its complement form a symmetric $2$-class association scheme. 
%Given an association scheme $(V,\{G_i\}_{0\leq i\leq d})$, one may take the union of classes to form graphs with larger edge sets (this is a fusion). %, however it is not necessarily guaranteed that the fused collection of graphs will form an association scheme on $V$.  If all fusions of an association scheme are also an association scheme, then  it  is called the \textit{amorphic association scheme}. Each of its graphs is obviously strongly regular graph. Van Dam \cite{van2003strongly} proved the following fact. 
%\begin{proposition}\cite{van2003strongly}\label{amorphic}
%Let $V$ be a  set of size $v$ and  $\{G_0,G_1, \ldots, G_d\}$ be an edge-decomposition of the complete graph on $V$, where each $G_i$ is a strongly regular graph on $V$. If $G_i$, $1\leq i \leq d$, are all of Latin square type or all of negative Latin square type, then the decomposition is a $d$-class amorphic association scheme on $V$.
%\end{proposition}
 One of the well-known construction methods of association schemes is to use Schur rings. Let $G$ be a finite Abelian group  and $D_i$, $0\leq i\leq d$, be nonempty subsets of $G$ with the following properties:
\begin{itemize}
\item [-] $D_0=\{1_G\}$;
\item [-] $G=D_0 \cup D_1 \cup \cdots \cup D_d$ and $D_i\cap D_j=\emptyset$ for $i\neq j$;
\item [-] $D^{(-1)}_i=D_{i'}$ for some $i'\in\{0,1,\ldots,d\}$, where $D^{(-1)}_i =\{g^{-1} : g\in D_i\}$;
\item [-]  $D_i D_j=\sum_{k=0}^{d}p_{ij}^k D_k$ for all $0\leq i,j\leq d$, where $p_{ij}^k$ are integers.
\end{itemize}
Then the subset $\langle D_0,\ldots, D_d \rangle$ in $\C[G]$ spanned by $D_0,\ldots, D_d$  is called \textit{Schur ring} over $G$. The configuration  $(G,\{G_i\}_{0\leq i\leq d})$  forms an association scheme of class $d$ on $G$, where $G_i:=\{ (g,h) : gh^{-1}\in D_i\}$ for $0\leq i\leq d$.
%To do this, we first recall the following fact.
For more detail on association schemes,  the reader is referred to  \cite{van2010some}.
%There are  several construction methods to construct association schemes, one of which is based on nonlinear functions. 

\subsection{Weakly regular plateaued function}
In this subsection, we  set the basic notations and previous results related to weakly regular plateaued functions. % \cite{IEEE}.
We first fix the following notations throughout  this paper unless otherwise stated.
\begin{itemize}
%\item For any set $E$,  $\# E$ denotes the cardinality of $E$ and $E^\star=E\setminus \{0\}$,
\item [-]  $\Z$  is the  ring of integers and $\mathbb{C}$ is the field of complex numbers,
%\item  $\vert z\vert$ denotes  the  magnitude of   $z\in\mathbb{C}$, where $\mathbb{C}$ is the field of complex numbers,
\item [-]  $p$ is an odd prime and $q=p^n$ is an $n$-th  power of $p$ with  a positive integer $n$,
\item   [-] $\F_{p^n}$ is the finite field with $p^n$ elements and $\F_{p^n}^{\star}:=\F_{p^n}\setminus \{ 0\} $ is a cyclic group, %  with generator $\zeta$, =\langle \zeta \rangle
\item   [-]
The  trace of $\alpha\in\F_{p^n}$ over $\mathbb {F}_{p}$ is defined by $\Tr^n(\alpha)=\alpha+\alpha^{p}+\alpha^{p^{2}}+\cdots+\alpha^{p^{n-1}}$,
\item  [-] $\xi_p=e^{2\pi i/p}$ is the complex primitive $p$-th root of unity, where $i=\sqrt {-1}$ is the complex primitive $4$-th root of unity,
%\item  $\left( \frac{a}{p} \right)$ denotes the \textit{Legendre symbol} for $a \in \F_p^\star$.
\item [-] $SQ$ and $NSQ$  denote respectively  the set of all  squares and non-squares in  $\F_p^{\star}$,
\item [-]  $p^*=\eta_0(-1)p$, where $\eta_0$  is the quadratic character of $\F_p^{\star}$. %Notice that $p^n=\eta_0^n(-1) \sqrt{p^*}^{2n}$. % In particular, $\sqrt{p^*}^{2}=\eta_0(-1) p=p^*$.

\end{itemize}

The function $\chi$ from $\F_q$ to $\C$ defined as 
 %\be\label{character}
$\chi(x)=\xi_p^{\Tr_p^q(x)}$
%\ee
 for    $x\in \F_q$ is called \textit{the canonical additive character} of $\F_q$.
An additive character of $\F_q$ has additive property: $\chi(x_1+x_2)=\chi(x_1)\chi(x_2)$  for all $x_1,x_2\in\F_q$.
Let $f :  \F_{p^n} \longrightarrow  \mathbb {F}_{p}$  be a $p$-ary function.
The Walsh transform of $f$ is given by:
$$\Wa {f}(\beta )=\sum_{x\in  \F_{p^n}} {\xi_p}^{{f(x)}-\Tr^n (\beta x)}, \;\; \beta\in  \F_{p^n}.$$
 A function  $f$ is said to be \textit{balanced} over $\F_p$ if 
 $\Wa f(0)=0$; otherwise, $f$ is called \textit{unbalanced}.
The notion of  plateaued functions was first introduced    in characteristic 2  by Zheng and Zhang   \cite{zheng1999plateaued} in 1999.   A function $f$ is  said to be  $p$-ary $s$-plateaued   if $|\widehat{\chi_f}(\beta)|^2\in\{0,p^{n+s}\}$ for every $\beta\in \F_{p^n}$, where $s$ is an integer with $0\leq s\leq n$. In the case of $s=0$, the $0$-plateaued function is the bent function.  From \cite{mesnager2017WCC}, an  $s$-plateaued $f$  is said to be  \emph{weakly regular}  if  there exists a constant complex number $u$ having unit magnitude 
such that 
$$
\Wa {f}(\beta)\in  \{ 0, up^{(n+s)/2}\xi_p^{g(\beta)} \}
$$
 for  all $\beta\in \F_{p^n}$, where  $g$ is  a $p$-ary function  over $\F_{p^n}$ with $g(\beta)=0$ for all $\beta\in  \F_{p^n} \setminus  \Supp(\Wa f)$; otherwise,  $f$ is said to be \emph{non-weakly regular}. Indeed, weakly regular    $f$ is said to be   \emph{regular} if $u=1$. For example, all quadratic functions are the weakly regular plateaued functions (see \cite[Proposition 3]{IEEE}).
The Walsh support of  plateaued   $f$ is defined by  $\Supp(\widehat{\chi_f})=\{\beta\in  \F_{p^n}:  |\widehat{\chi_f}(\beta)|^2= p^{n+s}\}$.  
 The \textit{Parseval identity} is given by $\sum_{\beta \in \F_{p^n}} | \widehat{\chi_f}(\beta )|^{2}=p^{2n}.$ The absolute Walsh distribution of   plateaued functions  follows directly from the Parseval identity. % (see, e.g.,   \cite{mesnager2014characterizations}).
\begin{lemma} \label{SupportLemma} %\cite{mesnager2015results}
Let $f:\F_{p^n}\to\F_{p}$ be an $s$-plateaued function. Then for $\beta \in\F_{p^n}$, $| \widehat{\chi_f}(\beta)|^2$ takes $p^{n-s}$ times the value $p^{n+s}$ and $p^n-p^{n-s}$ times the value $0$.
\end{lemma}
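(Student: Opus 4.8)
The plan is to read off the claim directly from the definition of an $s$-plateaued function together with the Parseval identity stated just above. First I would let $N:=\#\{\beta\in\F_{p^n} : |\widehat{\chi_f}(\beta)|^2=p^{n+s}\}$ denote the size of the Walsh support. By the defining property of an $s$-plateaued function, for every $\beta\in\F_{p^n}$ the quantity $|\widehat{\chi_f}(\beta)|^2$ equals either $0$ or $p^{n+s}$, so the multiset of values $\{|\widehat{\chi_f}(\beta)|^2 : \beta\in\F_{p^n}\}$ consists of exactly $N$ copies of $p^{n+s}$ and $p^n-N$ copies of $0$.

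Next I would evaluate the sum in the Parseval identity by splitting it over these two value classes:
\[
p^{2n}=\sum_{\beta\in\F_{p^n}}|\widehat{\chi_f}(\beta)|^2=N\cdot p^{n+s}+\bigl(p^n-N\bigr)\cdot 0=N\,p^{n+s}.
\]
Solving for $N$ gives $N=p^{2n}/p^{n+s}=p^{n-s}$, which is a nonnegative integer precisely because $0\le s\le n$. Consequently $|\widehat{\chi_f}(\beta)|^2$ takes the value $p^{n+s}$ for exactly $p^{n-s}$ choices of $\beta$ and the value $0$ for the remaining $p^n-p^{n-s}$ choices, which is the assertion.

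There is essentially no obstacle here: the only thing to be slightly careful about is to invoke the two-valued spectrum $\{0,p^{n+s}\}$ from the definition of $s$-plateaued before applying Parseval, so that the sum genuinely collapses to a single unknown $N$; the divisibility $p^{n+s}\mid p^{2n}$ and the range condition on $s$ then guarantee $N=p^{n-s}$ is well defined. No further case analysis or regularity assumption on $f$ is needed.
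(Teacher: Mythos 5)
Your proof is correct and is exactly the argument the paper intends: the lemma is stated as following "directly from the Parseval identity," and your counting of the support size $N$ via $N\,p^{n+s}=p^{2n}$ is that standard derivation. Nothing is missing.
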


\begin{lemma}\label{WalshFact} \cite{mesnager2017WCC}
Let  $f:\F_{p^n}\to\F_{p}$ be a weakly regular  $s$-plateaued function. Then
for  all $\beta\in \Supp(\Wa f)$,  we have
$
 \Wa {f}(\beta)=\epsilon \sqrt{p^*}^{n+s} \xi_p^{g(\beta)},
$
 where $\epsilon=\pm 1$ is    the sign of  $ \Wa f$ and  $g$ is a $p$-ary function over  $\Supp(\Wa f)$. 
\end{lemma}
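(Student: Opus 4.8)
The plan is to start from the \emph{definition} of a weakly regular $s$-plateaued function and to pin down the multiplicative constant attached to the Walsh values on the support. By definition there is a complex number $u$ of modulus $1$ and a $p$-ary function $g$ (vanishing off $\Supp(\widehat{\chi_f})$) such that $\widehat{\chi_f}(\beta)=u\,p^{(n+s)/2}\xi_p^{g(\beta)}$ for every $\beta\in\Supp(\widehat{\chi_f})$. Put $A:=u\,p^{(n+s)/2}$; since $u$ does not depend on $\beta$, the number $A$ is the same for all $\beta$ in the support, and the lemma amounts to showing that $A=\epsilon\,\sqrt{p^*}^{\,n+s}$ for some $\epsilon\in\{\pm1\}$, up to replacing $g$ by $g$ plus a constant of $\F_p$ (which is harmless, as $g$ is only required on the support).

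Next I would record the two elementary facts that drive the argument. First, fixing $\beta_0\in\Supp(\widehat{\chi_f})$ we have $A=\widehat{\chi_f}(\beta_0)\,\xi_p^{-g(\beta_0)}$, and since $\widehat{\chi_f}(\beta_0)=\sum_{x\in\F_{p^n}}\xi_p^{f(x)-\Tr^n(\beta_0 x)}$ is a $\Z$-linear combination of $p$-th roots of unity and $\xi_p^{-g(\beta_0)}$ is a unit of $\Z[\xi_p]$, the number $A$ lies in the ring of integers $\Z[\xi_p]$ of the cyclotomic field $\Q(\xi_p)$. Second, directly from the definition of an $s$-plateaued function, $A\overline{A}=|\widehat{\chi_f}(\beta_0)|^2=p^{\,n+s}$.

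Then I would do the arithmetic in $\Z[\xi_p]$. Let $\tau:=\sum_{a\in\F_p^{\star}}\eta_0(a)\,\xi_p^{a}$ be the quadratic Gauss sum, so that $\tau^2=p^*$ and $\tau\overline{\tau}=p$. The prime $p$ is totally ramified in $\Q(\xi_p)$, say $(p)=\mathfrak p^{\,p-1}$ with $\mathfrak p=(1-\xi_p)$, whence $(\tau)=\mathfrak p^{(p-1)/2}$; moreover $\mathfrak p$ is the unique prime above $p$, hence fixed by complex conjugation. From $(A)(\overline{A})=(p)^{\,n+s}=\mathfrak p^{(n+s)(p-1)}$ and the fact that a factor of a power of a single prime ideal is again a power of that prime, we get $(A)=\mathfrak p^{a}=(\overline{A})$, so $2a=(n+s)(p-1)$ and therefore $(A)=(\tau^{\,n+s})$. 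Thus $A=v\,\tau^{\,n+s}$ for a unit $v\in\Z[\xi_p]^{\times}$; comparing absolute values gives $v\overline{v}=1$, and the same holds for every Galois conjugate of $v$ (conjugation commuting with all automorphisms in this abelian extension), so by Kronecker's theorem on algebraic integers with all conjugates on the unit circle, $v$ is a root of unity. The roots of unity of $\Q(\xi_p)$ being exactly $\pm\xi_p^{\,j}$, we obtain $A=\pm\,\xi_p^{\,j}\,\tau^{\,n+s}$; absorbing $\xi_p^{\,j}$ into $g$ and identifying $\tau$ with $\sqrt{p^*}$ (the residual sign being absorbed into $\epsilon$), the claim follows with $\epsilon=\pm1$, which by construction records the sign of $\widehat{\chi_f}$.

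I expect the main content to be the third step: converting "$A\in\Z[\xi_p]$ with $|A|=p^{(n+s)/2}$" into "$A=\pm\sqrt{p^*}^{\,n+s}$ up to a $p$-th root of unity." There is no genuine obstacle here, but the one thing one must not drop is the \emph{integrality} of $A$: without it one only gets $|A|=p^{(n+s)/2}$, which does not force $u$ to lie in $\{\pm1\}\cdot\langle\xi_p\rangle$ (e.g.\ it would not exclude an unwanted factor of $i$). Everything else — the Walsh/Parseval identities already in the excerpt and the description of the roots of unity in $\Q(\xi_p)$ — is standard.
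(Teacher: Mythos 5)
Your argument is correct, and in fact the paper offers no proof to compare it with: Lemma~\ref{WalshFact} is imported from \cite{mesnager2017WCC} as a black box. Your route --- observing that $A=\Wa f(\beta_0)\xi_p^{-g(\beta_0)}=u\,p^{(n+s)/2}$ is an algebraic integer of $\Z[\xi_p]$ with $A\overline A=p^{n+s}$, deducing $(A)=(\tau^{\,n+s})$ from total ramification of $p$, and then invoking Kronecker's theorem to conclude $A=\pm\xi_p^{\,j}\tau^{\,n+s}$ with the $\xi_p^{\,j}$ absorbed into $g$ --- is exactly the standard cyclotomic-integrality argument used in that literature (going back to the analogous statement for weakly regular bent functions), and you correctly flag the one point that must not be dropped, namely the integrality of $A$ and the fact that \emph{all} Galois conjugates of the unit $v$ have absolute value $1$ because complex conjugation is central in $\mathrm{Gal}(\Q(\xi_p)/\Q)$.
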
 

 Let $f:\F_{p^n}\to\F_{p}$ be a weakly regular  $p$-ary $s$-plateaued unbalanced function, where $0\leq s\leq n$. In \cite{IEEE}, we denote by  $WRP$   the set of these functions satisfying the following two properties: $f(0)=0$ and there  exists a positive even integer $h$ with $\gcd(h-1,p-1)=1$ such that 
$f(ax)=a^hf(x)$  for any $a\in\F_p^{\star}$ and $x\in \F_{p^n}$.
In the following section, we make use of  these functions to construct   association schemes and  partial difference sets.  

\begin{lemma}\cite{IEEE}\label{Walshsupport}
Let $f\in WRP$. Then for any $\beta\in\Supp(\Wa f)$ (resp., $\beta\in\F_{p^n}\setminus\Supp(\Wa f)$), we have
$z\beta\in\Supp(\Wa f)$ (resp., $z\beta\in\F_{p^n}\setminus\Supp(\Wa f)$) for every $z\in\F_p^{\star}$.
\end{lemma}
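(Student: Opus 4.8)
The plan is to prove the slightly stronger biconditional: for every $z\in\F_p^{\star}$, one has $\beta\in\Supp(\Wa f)$ if and only if $z\beta\in\Supp(\Wa f)$. Applying this also with $z^{-1}$ in place of $z$ recovers both the ``$\Supp$'' and the ``$\F_{p^n}\setminus\Supp$'' assertions of the lemma at once. Since $\Supp(\Wa f)=\{\beta:\Wa f(\beta)\neq 0\}$ and each $\Wa f(\beta)$ lies in $\Z[\xi_p]$, it suffices to produce, for each such $z$, a field automorphism $\sigma$ of $\Q(\xi_p)$ with $\Wa f(z\beta)=\sigma(\Wa f(\beta))$: automorphisms are injective and fix $0$, so $\Wa f(z\beta)$ and $\Wa f(\beta)$ vanish together.

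First I would rewrite $\Wa f(\beta)$ using the substitution $x=cy$ for a parameter $c\in\F_p^{\star}$ to be fixed later. The map $y\mapsto cy$ permutes $\F_{p^n}$, the homogeneity property of $f\in WRP$ gives $f(cy)=c^hf(y)$ (as $c\in\F_p^{\star}$), and $\F_p$-linearity of the trace gives $\Tr^n(\beta cy)=c\,\Tr^n(\beta y)$; hence
$$\Wa f(\beta)=\sum_{y\in\F_{p^n}}\xi_p^{\,c^hf(y)-c\,\Tr^n(\beta y)}.$$
Next, apply the automorphism $\sigma_t\colon\xi_p\mapsto\xi_p^{t}$ of $\Q(\xi_p)$ for $t\in(\Z/p\Z)^{\star}$, which multiplies the exponent of $\xi_p$ by $t$:
$$\sigma_t(\Wa f(\beta))=\sum_{y\in\F_{p^n}}\xi_p^{\,tc^hf(y)-tc\,\Tr^n(\beta y)}.$$
Comparing this with $\Wa f(z\beta)=\sum_{y}\xi_p^{\,f(y)-z\,\Tr^n(\beta y)}$, it is enough to choose $c,t$ with $tc^h\equiv 1$ and $tc\equiv z\pmod p$; eliminating $t\equiv c^{-h}$ leaves the single condition $c^{\,h-1}\equiv z^{-1}\pmod p$.

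The only place a hypothesis is used — and hence the crux — is the solvability of this last congruence. Since $\gcd(h-1,p-1)=1$ (part of the definition of $WRP$), the map $c\mapsto c^{h-1}$ is a bijection of the cyclic group $\F_p^{\star}$ of order $p-1$, so there is a unique $c\in\F_p^{\star}$ with $c^{h-1}=z^{-1}$; then $t:=c^{-h}\bmod p$ is a unit, so $\sigma_t$ is a bona fide automorphism. With these choices $\Wa f(z\beta)=\sigma_t(\Wa f(\beta))$, and the lemma follows. I do not expect a genuine obstacle: the only care needed is bookkeeping between $\F_p$ and $\Z/p\Z$ in the exponents of $\xi_p$ (so that $\sigma_t$ acts there by multiplication by $t$) and checking $t\not\equiv 0$, which is automatic since $c\in\F_p^{\star}$. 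It is worth noting that neither plateauedness, nor weak regularity, nor $f(0)=0$, nor the parity of $h$ plays any role in this particular statement — only the homogeneity $f(ax)=a^hf(x)$ and the coprimality $\gcd(h-1,p-1)=1$.
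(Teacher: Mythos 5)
Your argument is correct and is essentially the standard one: the paper itself does not prove this lemma (it is quoted from \cite{IEEE}), but the device you use --- combining the substitution $x\mapsto cy$, the homogeneity $f(cy)=c^hf(y)$, and the Galois automorphism $\sigma_t:\xi_p\mapsto\xi_p^t$ so that $\Wa f(z\beta)=\sigma_t(\Wa f(\beta))$, with solvability of $c^{h-1}=z^{-1}$ guaranteed by $\gcd(h-1,p-1)=1$ --- is exactly the technique used in the cited source and reappears in this paper's proof of Lemma \ref{LemmaAsso} (via $\chi_\beta(\sL_t)=\sigma_t(\Wa f(t^{-1}\beta))$). One pedantic caveat: since the paper defines $\Supp(\Wa f)$ as $\{\beta:|\Wa f(\beta)|^2=p^{n+s}\}$, identifying it with the non-vanishing locus of $\Wa f$ (as your injectivity argument requires) does implicitly use that $f$ is $s$-plateaued, so plateauedness is not quite as irrelevant as your closing remark suggests.
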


\begin{proposition}\cite{IEEE}\label{Proposition5}
Let $f\in WRP$, then there exists a  positive even integer $l$ with $\gcd(l-1,p-1)=1$ such that  $g(a \beta)=a^lg(\beta)$  for any $a\in\F_p^{\star}$ and $\beta \in \Supp(\Wa f)$.
\end{proposition}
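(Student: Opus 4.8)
The plan is to compute the Walsh transform of the function $x\mapsto f(a\beta)$... wait, no: the statement is about the phase function $g$ attached to $\Wa f$, so the natural route is to exploit the homogeneity $f(ax)=a^h f(x)$ and push it through the Walsh transform. First I would fix $a\in\F_p^\star$ and $\beta\in\Supp(\Wa f)$ and relate $\Wa f(a\beta)$ to $\Wa f(\beta)$. Writing out
$$\Wa f(a\beta)=\sum_{x\in\F_{p^n}}\xi_p^{f(x)-\Tr^n(a\beta x)},$$
I would substitute $x=a^{-1}y$ (a bijection of $\F_{p^n}$ since $a\neq 0$), so $\Tr^n(a\beta x)=\Tr^n(\beta y)$ and $f(x)=f(a^{-1}y)=(a^{-1})^h f(y)=a^{-h}f(y)$ using the $WRP$-homogeneity (with $a^{-1}$ in the role of the scalar and noting $(a^{-1})^h=(a^h)^{-1}=a^{-h}$ in $\F_p$). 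This gives $\Wa f(a\beta)=\sum_y \xi_p^{a^{-h}f(y)-\Tr^n(\beta y)}$, which is not literally $\Wa f(\beta)$ but a "twisted" Walsh sum. By Lemma \ref{Walshsupport} we know $a\beta\in\Supp(\Wa f)$, and by Lemma \ref{WalshFact} both $\Wa f(a\beta)$ and $\Wa f(\beta)$ equal $\epsilon\sqrt{p^*}^{n+s}$ times a root of unity; comparing the two expressions will force a relation between $g(a\beta)$, $g(\beta)$ and the exponent of $a$.

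To make the comparison clean I would instead follow the standard trick used for these homogeneity results: pick a generator $z_0$ of $\F_p^\star$, so it suffices to treat $a=z_0$ and then iterate. The key identity to establish is that $g(a\beta)\equiv c(a)\,g(\beta)\pmod p$ for some multiplicative-type factor $c(a)$ depending only on $a$, and then identify $c(a)$ with $a^l$ for a suitable exponent $l$. Concretely: since $h$ is even and $\gcd(h-1,p-1)=1$, the map $a\mapsto a^{h}$ is a bijection of $\F_p^\star$; combined with the substitution above one shows that rescaling $\beta$ by $a$ rescales the phase $g(\beta)$ by the multiplicative function $a\mapsto (a^{h})^{-1}\cdot(\text{something})$ — the bookkeeping here is exactly the routine calculation I will not grind through, but the upshot is a factor that is itself a power $a^l$. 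The parity of $l$ and the condition $\gcd(l-1,p-1)=1$ should come out of the requirement that $c\colon\F_p^\star\to\F_p^\star$ is a well-defined group homomorphism that is moreover a bijection (so that $g$ takes every value on each line $\F_p^\star\beta$ the right number of times, consistently with $g$ being $\F_p$-valued), forcing $l\equiv h\pmod{p-1}$ or a closely related congruence, hence $l$ even and $\gcd(l-1,p-1)=1$.

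More carefully, the cleanest argument probably compares $\Wa f(a\beta)$ computed two ways. On one hand, directly from Lemma \ref{WalshFact}, $\Wa f(a\beta)=\epsilon\sqrt{p^*}^{n+s}\xi_p^{g(a\beta)}$. On the other hand, I want to re-derive $\Wa f(a\beta)$ from $\Wa f(\beta)$ using a scaling that acts on $f$ \emph{and} on the input simultaneously. The device is: for $b\in\F_p^\star$, consider $\Wa f(b\beta)=\sum_x \xi_p^{f(x)-\Tr^n(b\beta x)}$ and substitute $x=b^{t}y$ where $b^{th}\equiv 1$, i.e.\ choose $t$ so that the homogeneity exponent cancels — possible because $a\mapsto a^h$ is onto. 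Then one lands back on $\Wa f(\beta)$ up to replacing $\beta$ by $b^{t-1}\beta$ or similar, and matching phases yields $g(b\beta)$ as an explicit multiple of $g(\beta')$. Iterating/specializing gives the multiplicative law $g(a\beta)=a^l g(\beta)$ and reading off the constraints on $l$ from the known properties of $h$.

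The main obstacle will be the two-fold bookkeeping: the $WRP$-homogeneity couples the \emph{value} $f(x)$ and the \emph{argument} $x$ by the \emph{same} scalar, whereas the Walsh transform couples the argument $x$ and the frequency $\beta$; disentangling these so that a rescaling of $\beta$ alone produces a pure rescaling of $g(\beta)$ requires using surjectivity of $a\mapsto a^h$ (equivalently $\gcd(h,p-1)\mid\dots$, which here follows since $h$ is even with $\gcd(h-1,p-1)=1$) to "solve" for the right substitution exponent, and then carefully tracking how that exponent feeds into the phase. Once the multiplicative identity $g(a\beta)=c(a)g(\beta)$ is in hand, verifying $c(a)=a^l$ with $l$ positive, even, and $\gcd(l-1,p-1)=1$ is a short check: $c$ is a group homomorphism $\F_p^\star\to\F_p^\star$ hence $c(a)=a^l$ for some $l\in\{0,\dots,p-2\}$ (choose the representative to be positive), evenness and the gcd condition follow by comparing with the behavior forced by $h$ on the Walsh side (for quadratic $f$ one can sanity-check $l=2$, consistent with $\gcd(1,p-1)=1$ and $l$ even).
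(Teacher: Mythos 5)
Your overall strategy (substitute in the Walsh sum, invoke the homogeneity of $f$, and compare phases via Lemma \ref{WalshFact}) is the right one, but there is a genuine gap at exactly the point you defer as ``routine bookkeeping'', and the argument leans on a false auxiliary claim. You assert twice that $a\mapsto a^h$ is onto (indeed a bijection of) $\F_p^{\star}$ because $h$ is even and $\gcd(h-1,p-1)=1$; this is wrong: since $p$ is odd, $p-1$ is even, so $\gcd(h,p-1)\geq 2$ and the map is at least $2$-to-$1$ (for instance $(-1)^h=1^h$). Consequently your plan to ``choose $t$ so that the homogeneity exponent cancels'', i.e.\ $b^{th}=1$, either fails or collapses to the trivial substitution: for a generator $b$ of $\F_p^{\star}$, $b^{th}=1$ forces $th\equiv 0\pmod{p-1}$, and then the substitution $x=b^{t}y$ turns $\Wa{f}(b\beta)$ into $\Wa{f}(b^{t+1}\beta)$, which is the quantity you started from only when $t\equiv 0$.

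The missing device is not to cancel the factor in front of $f(y)$ but to absorb it with the Galois automorphism $\sigma_c:\xi_p\mapsto\xi_p^{c}$ of $\Q(\xi_p)$. Substituting $x=a^{k}y$ gives $\Wa{f}(a\beta)=\sum_{y}\xi_p^{a^{kh}f(y)-\Tr^n(a^{k+1}\beta y)}=\sigma_{a^{kh}}\bigl(\Wa{f}(a^{\,k+1-kh}\beta)\bigr)$, and one chooses $k\equiv (h-1)^{-1}\pmod{p-1}$ so that $k+1-kh\equiv 0$; this is where $\gcd(h-1,p-1)=1$ is actually used, not to make $a\mapsto a^{h}$ bijective. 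Then, using $\sigma_{c}(\sqrt{p^*})=\eta_0(c)\sqrt{p^*}$, one gets $\sigma_{a^{kh}}\bigl(\epsilon\sqrt{p^*}^{\,n+s}\xi_p^{g(\beta)}\bigr)=\epsilon\,\eta_0(a^{kh})^{n+s}\sqrt{p^*}^{\,n+s}\xi_p^{a^{kh}g(\beta)}$; the character factor is $1$ because $kh$ is even, and matching with $\Wa{f}(a\beta)=\epsilon\sqrt{p^*}^{\,n+s}\xi_p^{g(a\beta)}$ yields $g(a\beta)=a^{l}g(\beta)$ with $l\equiv h(h-1)^{-1}\pmod{p-1}$. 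From $k(h-1)\equiv 1$ one gets $l-1\equiv k=(h-1)^{-1}$, hence $\gcd(l-1,p-1)=1$, and $l$ is even because $h$ is even while $k$ is odd (as $k(h-1)$ is odd and $p-1$ is even). None of these steps --- the action of $\sigma$ on $\sqrt{p^*}^{\,n+s}$, the determination of $k$, the parity argument --- appear in your write-up, so as it stands the proof is incomplete, and your guess $l\equiv h\pmod{p-1}$ is not the correct exponent.
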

We now define the subsets
\be\label{Subsets}
D_{f,j}=\{ x\in\F_{p^n}: f(x)=j\}
\ee
 for every $j\in\F_p$ and 
\be\label{Elements}
 \sL_t=\sum_{j=0}^{p-1}D_{f,j}\xi_p^{jt}
\ee
 for every $t\in\F_p$, in particular, we have $\sL_0=\F_{p^n}$ (here $\sL_t$ can be seen as an element of the group ring $\C[(\F_{p^n},+)]$). 
% Indeed, we can define 
%\be\nn
%\sL_{SQ}=\sum_{t\in SQ}\sL_t \mbox{ and } \sL_{NSQ}=\sum_{t\in NSQ}\sL_t.
%\ee
Clearly, we have 
$
\sum_{j=0}^{p-1}D_{f,j}=\F_{p^n} \mbox{ and } D_{f,i}\cap D_{f,j}=\emptyset$ for $i\neq j$,
 which implies that  $D_{f,j}$'s are the partition of $\F_{p^n}$. Moreover, for every $j\in\F_p$ we have  $D_{f,j}^{(-1)}=D_{f,j}$  and 
the following  may be  easily observed 
\be\label{DD}
D_{f,j}=\frac{1}{p} \sum_{t=0}^{p-1}\sL_t\xi_p^{-jt}.
\ee

\begin{lemma}\cite{IEEE}\label{Lemma7}
Let $n+s$ be an even integer and $f:\F_{p^n}\to\F_p$ be an unbalanced function with $\Wa f (0)=\epsilon \sqrt{p^*}^{n+s}$, where $\epsilon=\pm 1$  is    the sign of  $ \Wa f$. We  denote by $\sN_f(j)$ the size of the set $D_{f,j}$ defined in (\ref{Subsets}) for every $j\in\F_p$.
  Then,  
\be\nn
\sN_{f}(j)=\left\{\begin{array}{ll}
p^{n-1}+\epsilon \eta_0 (-1)(p-1)\sqrt{p^*}^{n+s-2},  & \mbox{ if } j=0, \\
p^{n-1}-\epsilon \eta_0 (-1)\sqrt{p^*}^{n+s-2},& \mbox{ if } j\in\F_p^{\star}.
 \end{array}\right.
\ee
% if $n+s$ is odd,  
%$
%\sN_f(j)=\left\{\begin{array}{ll}
%p^{n-1},  & \mbox{ if } j=0, \\
%p^{n-1}+\epsilon \sqrt{p^*}^{n+s-1}, % \eta_0^{\frac{n+s-1}{2}}(-1)p^{\frac{n+s-1}{2}}
%& \mbox{ if }  j \in SQ,\\
%p^{n-1}-\epsilon  \sqrt{p^*}^{n+s-1},& \mbox{ if }  j\in NSQ.
% \end{array}\right.$
\end{lemma}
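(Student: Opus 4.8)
The plan is to compute $\sN_f(j) = |D_{f,j}|$ directly via the standard Fourier-inversion / character-sum identity for the level sets of a $p$-ary function. First I would write
\[
\sN_f(j) = |D_{f,j}| = \sum_{x\in\F_{p^n}} \frac{1}{p}\sum_{t=0}^{p-1} \xi_p^{t(f(x)-j)}
= \frac{1}{p}\sum_{t=0}^{p-1}\xi_p^{-jt}\sum_{x\in\F_{p^n}} \xi_p^{tf(x)},
\]
isolating the $t=0$ term, which contributes $p^{n-1}$. For $t\in\F_p^{\star}$ the inner sum is $\sum_{x}\xi_p^{tf(x)}$, and the key observation is that this equals $\Wa{tf}(0)$, the Walsh transform of the function $tf$ evaluated at $0$; using the substitution $x\mapsto a x$ together with the homogeneity $f(ax)=a^hf(x)$ one reduces $\sum_x\xi_p^{tf(x)}$ to an expression governed by $\Wa f(0)$ and a quadratic-character factor. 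Concretely, since $f\in WRP$ with $\Wa f(0)=\epsilon\sqrt{p^*}^{n+s}$, one gets $\sum_{x\in\F_{p^n}}\xi_p^{tf(x)} = \eta_0(t)\,\epsilon\,\sqrt{p^*}^{n+s}$ for every $t\in\F_p^{\star}$ (the scaling by $t$ introduces exactly the Legendre symbol $\eta_0(t)$, because $h$ is even and $\gcd(h-1,p-1)=1$ forces $a\mapsto a^h$ to be a bijection on $\F_p^\star$ compatible with squares).

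Substituting this back, I would obtain
\[
\sN_f(j) = p^{n-1} + \frac{\epsilon\sqrt{p^*}^{n+s}}{p}\sum_{t\in\F_p^\star}\eta_0(t)\,\xi_p^{-jt}.
\]
Now the two cases fall out of elementary Gauss-sum facts about $\sum_{t\in\F_p^\star}\eta_0(t)\xi_p^{-jt}$. When $j=0$ this sum is $\sum_{t\in\F_p^\star}\eta_0(t) = 0$? — no: one must be careful here, since $\sum_{t\in\F_p^\star}\eta_0(t)=0$, which would give $\sN_f(0)=p^{n-1}$, contradicting the claim; so in fact the correct bookkeeping keeps the $t=0$ term of $\sum_x\xi_p^{tf(x)}$ separate only through $\Wa f(0)$, and one instead writes $\sum_{x}\xi_p^{tf(x)}$ for $t\ne 0$ in terms of $\Wa f(0)$ via $\sum_x \xi_p^{tf(x)} = \overline{\eta_0(t)}\,\Wa f(0)$ only when $f$ is genuinely homogeneous of the stated type — and then for $j=0$ the relevant sum becomes $\sum_{t\in\F_p^\star}\eta_0(t) \cdot 1$, which vanishes, so the stated value $p^{n-1}+\epsilon\eta_0(-1)(p-1)\sqrt{p^*}^{n+s-2}$ must come from a slightly different normalization. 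I would therefore recompute carefully: the identity to use is that for $t\in\F_p^\star$, $\sum_{x\in\F_{p^n}}\xi_p^{tf(x)} = \epsilon\,\eta_0(t^{?})\sqrt{p^*}^{n+s}$ where the exponent on $t$ depends on $h\bmod 2$ and the dimension parity, and $\sqrt{p^*}^{2} = p^* = \eta_0(-1)p$, which is precisely what converts $\sqrt{p^*}^{n+s}/p$ into $\eta_0(-1)\sqrt{p^*}^{n+s-2}$. With the exponent on $t$ being \emph{even} (so the character factor is trivial), the $j=0$ sum is $\sum_{t\in\F_p^\star} 1 = p-1$, giving the first line, and for $j\in\F_p^\star$ the sum $\sum_{t\in\F_p^\star}\xi_p^{-jt} = -1$, giving the second line. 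This is the actual mechanism, and it hinges on showing the $t$-dependence of the exponential sum is via $t^{\text{even}}$, i.e.\ trivial on $\F_p^\star$.

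So the precise key step is the evaluation, for $t\in\F_p^\star$, of $\sum_{x\in\F_{p^n}}\xi_p^{tf(x)}$: via $x\mapsto a x$ with $a^h\equiv t$ (solvable since $a\mapsto a^h$ permutes $\F_p^\star$, as $\gcd(h,p-1)=1$ because $h$ is even and $\gcd(h-1,p-1)=1$ together with... one must check $\gcd(h,p-1)=1$ actually follows) one gets $\sum_x\xi_p^{tf(x)} = \sum_x \xi_p^{f(ax)} = \sum_x\xi_p^{f(x)} = \Wa f(0) = \epsilon\sqrt{p^*}^{n+s}$, \emph{independent of $t$}. Then
\[
\sN_f(j) = p^{n-1} + \frac{\epsilon\sqrt{p^*}^{n+s}}{p}\sum_{t=1}^{p-1}\xi_p^{-jt},
\]
and $\sum_{t=1}^{p-1}\xi_p^{-jt}$ equals $p-1$ if $j=0$ and $-1$ if $j\in\F_p^\star$; finally $\sqrt{p^*}^{n+s}/p = \eta_0(-1)\sqrt{p^*}^{n+s-2}$ since $\sqrt{p^*}^2 = \eta_0(-1)p$. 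The main obstacle is the clean justification that $a\mapsto a^h$ is a bijection of $\F_p^\star$ (hence the reindexing is legitimate and $t$ drops out): I expect to derive $\gcd(h,p-1)=1$ from the hypotheses $h$ even and $\gcd(h-1,p-1)=1$ — indeed any common prime divisor of $h$ and $p-1$ would have to divide neither $h-1$, so is coprime to $h-1$, fine, but to rule it out one uses that $f\in WRP$ already builds in this homogeneous structure, so one may simply invoke Lemma~\ref{Walshsupport} and the $WRP$ definition which encode exactly that $\Wa f(z\beta)$ and the level-set sizes behave multiplicatively; everything else is the elementary geometric-sum computation above.
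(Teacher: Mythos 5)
Your overall skeleton --- Fourier inversion over the level sets, isolating the $t=0$ term to get $p^{n-1}$, showing that $\sum_{x}\xi_p^{tf(x)}$ does not depend on $t\in\F_p^{\star}$, and then evaluating the geometric sums $\sum_{t=1}^{p-1}\xi_p^{-jt}$ together with $\sqrt{p^*}^{\,n+s}/p=\eta_0(-1)\sqrt{p^*}^{\,n+s-2}$ --- is exactly right, and the final arithmetic is correct. The gap is in the one step that actually carries the content: your justification that $\sum_{x}\xi_p^{tf(x)}=\Wa f(0)$ for every $t\in\F_p^{\star}$. You try to get this by reindexing $x\mapsto ax$ with $a^h=t$, which requires $a\mapsto a^h$ to be a bijection of $\F_p^{\star}$, i.e.\ $\gcd(h,p-1)=1$. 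That is \emph{never} true here: $h$ is even and $p-1$ is even, so $2\mid\gcd(h,p-1)$ and the image of $a\mapsto a^h$ is a proper subgroup of $\F_p^{\star}$ (e.g.\ for $p=5$, $h=4$ the image is $\{1\}$). The hypothesis $\gcd(h-1,p-1)=1$ gives you nothing about $\gcd(h,p-1)$, and your fallback of ``invoking the $WRP$ definition'' does not supply the missing fact either. Note also that the lemma as stated does not even assume $f\in WRP$ or any homogeneity --- only that $\Wa f(0)=\epsilon\sqrt{p^*}^{\,n+s}$ with $n+s$ even --- which is a strong hint that homogeneity is not the right tool.

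The correct mechanism is Galois-theoretic, and it is the same one the paper uses in the proof of Lemma~\ref{LemmaAsso}: since $f$ takes values in $\F_p$, applying the automorphism $\sigma_t$ of $\Q(\xi_p)$ determined by $\xi_p\mapsto\xi_p^{t}$ gives $\sum_{x}\xi_p^{tf(x)}=\sigma_t\bigl(\sum_{x}\xi_p^{f(x)}\bigr)=\sigma_t(\Wa f(0))$. Because $n+s$ is even, $\Wa f(0)=\epsilon (p^*)^{(n+s)/2}$ is a rational integer, hence fixed by every $\sigma_t$, and the $t$-independence follows with no appeal to $h$ at all. (Had $n+s$ been odd, $\sigma_t(\sqrt{p^*})=\eta_0(t)\sqrt{p^*}$ would reintroduce the quadratic character you were wrestling with in your first, discarded attempt.) With that substitution your computation goes through verbatim; as written, however, the key identity rests on a false bijectivity claim, so the proof is not complete.
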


\section{The construction of association schemes and strongly regular graphs}\label{SectionConstruction}
This section generalizes the construction methods given in  \cite{chee2011strongly,feng2010partial,pott2011association} of association schemes and partial difference sets from weakly regular bent functions  to weakly regular plateaued functions over finite fields of odd characteristic. Throughout this  section, the group $G$ is an additive group of $(\F_{p^n},+)$.

\subsection{A family of association schemes of class $p$  from weakly regular $p$-ary plateaued functions}
In this subsection, we give a family of   $p$-class association schemes  from weakly regular  plateaued functions over  $\F_p$.  To do this, we first need the following properties of the $\sL_t$'s. 

\begin{lemma} \label{LemmaAsso}%\cite{pott2011association}
Let $f\in WRP$ and $l$ be given by Proposition \ref{Proposition5}. Let $D_{f,j}$  and $\sL_j$ be given by (\ref{Subsets}) and (\ref{Elements}) for every $j\in\F_p$, respectively.
For  $k,t\in\F_p^{\star}$, 
\begin{itemize}
\item [$i.)$] if  $k+t\neq 0$,   we have $\sL_t\sL_k=\epsilon\eta^{n+s}_0(ktv)\sqrt{p^*}^{n+s}\sL_v$, where  $\epsilon=\pm 1$ is    the sign of  $ \Wa f$ and
$v=(k^{1-l}+t^{1-l})^\frac{1}{{1-l}}$; 
\item [$ii.)$] $\sL_t\sL_{-t}=p^{n}$;
\item [$iii.)$] $\sum_{t=1}^{p-1}\sL_t\sL_0\xi_p^{-jt}=(p\#D_{f,j}-p^n)\F_{p^n}$ for $j\in\F_p$. 
\end{itemize}
\end{lemma}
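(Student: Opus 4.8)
The plan is to compute each product $\sL_t\sL_k$ directly from the definition $\sL_t=\sum_{j=0}^{p-1}D_{f,j}\xi_p^{jt}$ by passing to the Walsh/Fourier side. The basic identity I would establish first is a ``Walsh transform of $\sL_t$'' formula: writing $\chi$ for the canonical additive character, one has for $\beta\in\F_{p^n}$ that $\sum_{x\in\F_{p^n}}\xi_p^{tf(x)}\chi(-\beta x)=\Wa{tf}(\beta)$, and since $f\in WRP$ with $h$ the homogeneity exponent, $tf$ is again weakly regular $s$-plateaued (scaling the value by $t\in\F_p^\star$ just reparametrises the support and the phase function $g$), so by Lemma~\ref{WalshFact} its Walsh transform is supported on a dilate of $\Supp(\Wa f)$ and equals $\epsilon(t)\sqrt{p^*}^{n+s}\xi_p^{g_t(\beta)}$ there. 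Concretely I would use the substitution $x\mapsto t^{-1/h}$-type scaling together with Proposition~\ref{Proposition5} (which controls how $g$ scales: $g(a\beta)=a^lg(\beta)$) to see that $\Wa{tf}(\beta)=\epsilon\,\eta_0^{\,n+s}(t)\,\sqrt{p^*}^{n+s}\,\xi_p^{t^{1-l}\,?}$-shaped, matching the exponent bookkeeping that produces $v=(k^{1-l}+t^{1-l})^{1/(1-l)}$.

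For part $(i.)$ and $(ii.)$ I would expand $\sL_t\sL_k$ in the group ring and use the orthogonality of characters to rewrite the coefficient of each group element $y\in\F_{p^n}$ as an average over $\beta$ of $\Wa{tf}(\beta)\Wa{kf}(\beta)\chi(\beta y)/p^n$ — i.e. the standard convolution-to-pointwise-product dictionary. Substituting the weakly regular plateaued values, the product of the two Walsh transforms is $\epsilon^2\eta_0^{\,n+s}(kt)\,p^{*\,n+s}\,\xi_p^{g_k(\beta)+g_t(\beta)}$ on the (common, after dilation) support, which by Proposition~\ref{Proposition5} collapses to a single phase of the form $\xi_p^{g_v(\beta)}$ with $v$ as stated precisely when $k+t\neq0$ (so that $k^{1-l}+t^{1-l}$ is the base of a well-defined phase and is nonzero — here the parity of $l-1$ and $\gcd(l-1,p-1)=1$ guarantee $z\mapsto z^{1-l}$ is a bijection of $\F_p^\star$ and that $v$ exists). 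Recognising the resulting Fourier expression as $\epsilon\eta_0^{n+s}(ktv)\sqrt{p^*}^{n+s}\sL_v$ is then just reading the dictionary backwards. When $k+t=0$ the two exponents cancel identically, the character sum becomes $p^n\cdot[\,y=0\,]$ after summing the trivial phase over the support and its complement, giving $\sL_t\sL_{-t}=p^n\cdot 1_G$; note the statement writes $p^n$ for $p^n1_G$.

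For part $(iii.)$ I would use $\sL_0=\F_{p^n}=G$ in the group ring, so $\sL_t\sL_0=\sL_tG$, and the elementary fact that $AG=(\#A)\,G$ for any $A\in\Z[G]$; hence $\sL_tG=\big(\sum_j\#D_{f,j}\,\xi_p^{jt}\big)G$. Then $\sum_{t=1}^{p-1}\sL_t\sL_0\xi_p^{-jt}=\big(\sum_{i}\#D_{f,i}\sum_{t=1}^{p-1}\xi_p^{(i-j)t}\big)G$, and the inner sum is $p-1$ when $i=j$ and $-1$ otherwise, so the bracket equals $p\,\#D_{f,j}-\sum_i\#D_{f,i}=p\,\#D_{f,j}-p^n$, which is exactly the claim.

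The main obstacle I anticipate is $(i.)$: rigorously tracking the quadratic-character sign $\eta_0^{n+s}(ktv)$ and verifying that the two phase functions $g_t,g_k$ genuinely add to $g_v$ on the nose (not merely up to a constant), which requires care with how the homogeneity exponents $h$ (for $f$) and $l$ (for $g$) interact under the scaling $x\mapsto \lambda x$, and with the fact that $\Supp(\Wa{tf})$ is the dilate $t^{?}\Supp(\Wa f)$ so that the three supports involved match up after the change of variables. The sign factor in particular will come from evaluating a quadratic Gauss-type sum under the dilation, and getting the exponent of $\eta_0$ to be exactly $n+s$ (rather than, say, $n+s$ plus a correction) is the delicate point.
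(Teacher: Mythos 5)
Your strategy is the paper's: evaluate the additive characters $\chi_\beta$ on $\sL_t$, recognize the result as a (Galois twist of the) Walsh transform, multiply pointwise over $\beta$, and invert; part $(iii)$ you in fact work out in more detail than the paper, which only cites a reference. The one place your plan would break down as written is the mechanism you offer for the sign in $(i)$: the substitution $x\mapsto t^{-1/h}x$ is not available, because $h$ is \emph{even}, hence $\gcd(h,p-1)\geq 2$ and $t^{1/h}$ fails to exist for half of the $t\in\F_p^{\star}$ (already for quadratics, $h=2$ and $t$ a nonsquare). The paper's device, which is what you need here, is the Galois automorphism $\sigma_t:\xi_p\mapsto\xi_p^{t}$ of $\Q(\xi_p)$: pure reindexing of the exponent gives $\chi_\beta(\sL_t)=\sum_{x}\xi_p^{tf(x)+\Tr^n(\beta x)}=\sigma_t\bigl(\Wa f(-t^{-1}\beta)\bigr)$, and then $\sigma_t$ multiplies $\sqrt{p^*}^{\,n+s}$ by $\eta_0^{n+s}(t)$ (your anticipated Gauss-sum sign, obtained for free) and turns $\xi_p^{g(t^{-1}\beta)}=\xi_p^{t^{-l}g(\beta)}$ into $\xi_p^{t^{1-l}g(\beta)}$ via Proposition \ref{Proposition5}, with Lemma \ref{Walshsupport} guaranteeing that the support is unchanged under scaling by $t$. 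After that the phases add exactly, $k^{1-l}+t^{1-l}=v^{1-l}$, and the prefactor follows by rewriting $\epsilon^{2}\eta_0^{n+s}(tk)(p^*)^{n+s}\xi_p^{v^{1-l}g(\beta)}$ as $\epsilon\,\eta_0^{n+s}(tkv^{-1})\sqrt{p^*}^{\,n+s}\,\chi_\beta(\sL_v)$ and using $\eta_0(v^{-1})=\eta_0(v)$. So the missing idea is the Galois twist; once the $t^{1/h}$-scaling is replaced by $\sigma_t$, your outline coincides with the paper's proof.

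One caveat on $(ii)$, which applies equally to the paper's own argument: from character values $p^{n+s}$ on $\Supp(\Wa f)$ (a set of size $p^{n-s}$) and $0$ elsewhere, Fourier inversion gives the coefficient of a nonzero $y$ in $\sL_t\sL_{-t}$ as $p^{s}\sum_{\beta\in\Supp(\Wa f)}\chi(\beta y)$; this is $p^{n}$ at $y=0$ but is not visibly $0$ for $y\neq 0$ once $s>0$. Your phrase ``summing the trivial phase over the support and its complement'' glosses over exactly this point, in the same way the paper's appeal to $\#\Supp(\Wa f)=p^{n-s}$ does, so for $s>0$ an additional argument about the structure of $\Supp(\Wa f)$ would be needed to land on $p^{n}\cdot 1_G$.
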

\begin{proof}
For $\beta\in\F_{p^n}$,
applying an additive character $\chi_\beta$ of $\F_{p^n}$ to $\sL_t$   for $t\in\F_p^{\star}$, we have 
\be\nn
 \begin{array}{ll}
\chi_\beta(\sL_t)&=\displaystyle\sum_{j=0}^{p-1}\chi_\beta(D_{f,j})\xi_p^{jt}=
\displaystyle\sum_{x\in \F_{p^n}} \xi_p^{tf(x)+\Tr(\beta x)}=
\sigma_t(\Wa {f}(t^{-1}\beta))\\
&=%\epsilon\eta_0^{n+s}(t) \sqrt{p^*}^{n+s} \xi_p^{t^{1-l}g(\beta)}
\left\{ \begin{array}{ll}
0,& \mbox{ if }  \beta\in\F_{p^n}\setminus\Supp(\Wa f),\\
\epsilon\eta_0^{n+s}(t) \sqrt{p^*}^{n+s} \xi_p^{t^{1-l}g(\beta)},& \mbox{ if }  \beta\in\Supp(\Wa f),
\end{array}\right.
\end{array}
\ee
 where we used Lemma \ref{Walshsupport} and Proposition \ref{Proposition5} in the last equality and  $g$ is a $p$-ary function over  $\Supp(\Wa f)$. We now prove  $(i)$.  If $k+t\neq 0$, then for $\beta\in\Supp(\Wa f)$
\be\nn
\begin{array}{ll}
\chi_\beta(\sL_t\sL_k)&= 
 \epsilon^2\eta_0^{n+s}(tk) (\sqrt{p^*}^{n+s})^2 \xi_p^{(k^{1-l}+t^{1-l})g(\beta)} \\
&=
\epsilon \eta_0^{n+s}(tkv^{-1}) \sqrt{p^*}^{n+s}
\epsilon \eta_0^{n+s}(v) \sqrt{p^*}^{n+s} \xi_p^{v^{1-l}g(\beta)}\\
&=\epsilon \eta_0^{n+s}(tkv) \sqrt{p^*}^{n+s}\chi_\beta(\sL_v),
\end{array}
\ee
where we used the fact that $\eta_0(tkv^{-1})=\eta_0(tkv)$ for $v\in\F_p^{\star}$ in the last equality. This implies that we have $\sL_t\sL_k=\epsilon \eta_0^{n+s}(tkv) \sqrt{p^*}^{n+s}\sL_v$ with $k^{1-l}+t^{1-l}=v^{1-l}.$ Notice that $a^{1-l}+b^{1-l}=0$ if and only if $a+b=0$ for $a,b\in\F_p$  since $(l-1,p-1)=1$. We now prove $(ii)$.  For  $t\in\F_p^{\star}$, we clearly have 
\be\nn
\chi_\beta(\sL_t\sL_{-t})=
\left\{ \begin{array}{ll}
0,& \mbox{ if }  \beta\in\F_{p^n}\setminus\Supp(\Wa f),\\
  \eta_0^{n+s}(-1) (p^*)^{n+s}=p^{n+s},& \mbox{ if }  \beta\in\Supp(\Wa f).
\end{array}\right.
\ee
From the fact that $\#\Supp(\Wa f)=p^{n-s}$ by Lemma \ref{SupportLemma}, the proof is complete.
 The proof of $(iii)$ follows from simple computations (see \cite[Lemma 5]{pott2011association}).
%\be\nn
%\sum_{t=1}^{p-1}\sL_t\sL_0\xi_p^{-at}=
%\sum_{t=1}^{p-1}\sum_{j=0}^{p-1} D_{f,j}\xi_p^{(j-a)t}\F_{p^n}=
%\sum_{t=1}^{p-1}\sum_{j=0}^{p-1} \#D_{f,j}\xi_p^{(j-a)t}\F_{p^n}=
%\sum_{j=0}^{p-1}\left(\sum_{t=1}^{p-1}\xi_p^{(j-a)t}\right) \#D_{f,j}\F_{p^n}=
%\sum_{j=0}^{p-1}\left(\sum_{t=1}^{p-1}\xi_p^{(j-a)t}\right) \#D_{f,j}\F_{p^n}=
%\ee
\end{proof}
It is worth noting that the equalities in Lemma \ref{LemmaAsso} should be seen as equalities in the group ring $\C[(\F_{p^n},+)]$. Indeed, the right hand side of the equality in ($ii$) is really $p^n\cdot 0$, where $0$ is the identity of the additive group of  $\F_{p^n}$.
%$ \Wa {f}(\beta)=\epsilon \sqrt{p^*}^{n+s} \xi_p^{g(\beta)},
%$ where $\epsilon=\pm 1$ is    the sign of  $ \Wa f$ and  $g$ is a $p$-ary function over  $\Supp(\Wa f)$. 
We can derive from Lemma \ref{LemmaAsso} the following reasonable fact.
\begin{lemma}\label{Lemma2Pott}%\cite{pott2011association}
Let $f\in WRP$, $l$ be given by Proposition \ref{Proposition5} and $D_{f,j}$ be given by (\ref{Subsets}) for every $j\in\F_p$. Then  for $a,b\in\F_p$  
\be\nn
\begin{array}{ll}
p^2D_{f,a}D_{f,b}&= \displaystyle\sum_{k=1}^{p-1}p^{n}\xi_p^{k(a-b)} + (p(\# D_{f,a}+\# D_{f,b})-p^n)\F_{p^n} \\
&+\displaystyle \epsilon \sqrt{p^*}^{n+s}\sum_{\substack{k,t=1\\ k+t\neq 0}}^{p-1} \eta_0^{n+s}(tkv)  \xi_p^{-at-bk} \sL_v,
\end{array} 
\ee
where $v=(k^{1-l}+t^{1-l})^{\frac{1}{1-l}}$.
\end{lemma}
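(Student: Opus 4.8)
The plan is to start from the expansion \eqref{DD}, which writes each $D_{f,a}$ as $\frac{1}{p}\sum_{t=0}^{p-1}\sL_t\xi_p^{-at}$, and simply multiply the two such expansions together. This gives
\[
p^2 D_{f,a}D_{f,b}=\sum_{t=0}^{p-1}\sum_{k=0}^{p-1}\sL_t\sL_k\,\xi_p^{-at-bk},
\]
and the whole proof is the bookkeeping of splitting this double sum according to which of the three cases of Lemma \ref{LemmaAsso} (together with $\sL_0=\F_{p^n}$) applies to the product $\sL_t\sL_k$.

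Concretely, I would partition the index set $\{0,\dots,p-1\}^2$ into four pieces. First, the terms with $t=k=0$: here $\sL_0\sL_0=\F_{p^n}\F_{p^n}=p^n\F_{p^n}$, contributing $p^n\F_{p^n}$. Second, the terms with exactly one of $t,k$ equal to $0$: these are $\sum_{t=1}^{p-1}\sL_t\sL_0\xi_p^{-at}+\sum_{k=1}^{p-1}\sL_0\sL_k\xi_p^{-bk}$, and by part $(iii)$ of Lemma \ref{LemmaAsso} these evaluate to $(p\#D_{f,a}-p^n)\F_{p^n}+(p\#D_{f,b}-p^n)\F_{p^n}$. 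Adding this to the first piece yields $\big(p(\#D_{f,a}+\#D_{f,b})-p^n\big)\F_{p^n}$, which is the middle term on the right-hand side of the claimed identity. Third, the terms with $t,k\in\F_p^{\star}$ and $t+k=0$, i.e.\ $k=-t$: by part $(ii)$, $\sL_t\sL_{-t}=p^n$, so these contribute $\sum_{t=1}^{p-1}p^n\xi_p^{-at+bt}=\sum_{t=1}^{p-1}p^n\xi_p^{t(b-a)}$; after reindexing $k=t$ (or noting symmetry of the sum over $\F_p^\star$) this is $\sum_{k=1}^{p-1}p^n\xi_p^{k(a-b)}$, the first term on the right-hand side. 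Fourth, the terms with $t,k\in\F_p^{\star}$ and $t+k\neq0$: by part $(i)$, $\sL_t\sL_k=\epsilon\eta_0^{n+s}(tkv)\sqrt{p^*}^{n+s}\sL_v$ with $v=(k^{1-l}+t^{1-l})^{1/(1-l)}$, giving exactly the double-sum term $\epsilon\sqrt{p^*}^{n+s}\sum_{\substack{k,t=1\\k+t\neq0}}^{p-1}\eta_0^{n+s}(tkv)\xi_p^{-at-bk}\sL_v$. Collecting the four pieces reproduces the statement.

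There is essentially no hard step here; the only points requiring a little care are purely cosmetic. One should check that the sign of the exponent in the $\xi_p^{k(a-b)}$ term comes out as written (a reindex $t\mapsto -t$ over the group $\F_p^\star$, or just observing $\sum_{t=1}^{p-1}\xi_p^{t(b-a)}=\sum_{t=1}^{p-1}\xi_p^{t(a-b)}$ since replacing $t$ by $-t$ permutes $\F_p^\star$), and one should be consistent about whether $\eta_0^{n+s}(tkv)$ or $\eta_0^{n+s}(tkv^{-1})$ appears — but Lemma \ref{LemmaAsso}$(i)$ already records that these are equal for $v\in\F_p^\star$, so either form is fine. I would also remark, as the paper does after Lemma \ref{LemmaAsso}, that all these manipulations are taking place in the group ring $\C[(\F_{p^n},+)]$, so that e.g.\ ``$p^n$'' in part $(ii)$ means $p^n$ times the identity element $0$ of the additive group. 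No obstacle is anticipated; the result is labelled ``reasonable fact'' precisely because it is a direct consequence of Lemma \ref{LemmaAsso} via \eqref{DD}.
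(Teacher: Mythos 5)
Your proposal is correct and follows essentially the same route as the paper: expand $p^2D_{f,a}D_{f,b}=\sum_{k,t=0}^{p-1}\sL_t\sL_k\xi_p^{-at-bk}$ via (\ref{DD}), split the index set into the four cases ($t=k=0$; exactly one index zero; $k=-t$ nonzero; $k,t\in\F_p^{\star}$ with $k+t\neq 0$), and apply the three parts of Lemma \ref{LemmaAsso} together with $\sL_0^2=p^n\F_{p^n}$. The bookkeeping, including the sign of the exponent in the $\xi_p^{k(a-b)}$ term, checks out.
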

\begin{proof} For $a,b\in\F_p$, by (\ref{DD})   we have
\begin{displaymath}\nn
\begin{array}{ll}
&p^2D_{f,a}D_{f,b} =\displaystyle \sum_{k,t=0}^{p-1}\sL_t\sL_k\xi_p^{-at-bk}\\
=&\sL_0^2
+\displaystyle\sum_{\substack{k,t=1\\ k+t\neq 0}}^{p-1}\sL_t\sL_k\xi_p^{-at-bk}
+\displaystyle\sum_{k=1}^{p-1}\sL_k\sL_{-k}\xi_p^{k(a-b)}
+\displaystyle\sum_{k=1}^{p-1}\sL_k\sL_0\xi_p^{-bk}
+\displaystyle \sum_{t=1}^{p-1}\sL_t\sL_0\xi_p^{-at}\\
=&p^n\F_{p^n}+
\displaystyle \sum_{\substack{k,t, k+t\neq 0\\ k^{1-l}+t^{1-l}=v^{1-l}}}   \epsilon \eta_0^{n+s}(tkv)  \sqrt{p^*}^{n+s} \xi_p^{-at-bk} \sL_v
+\displaystyle\sum_{k=1}^{p-1}p^{n }\xi_p^{k(a-b)}\\
&+(p\# D_{f,b}-p^n)\F_{p^n} + (p\# D_{f,a}-p^n)\F_{p^n},
%=&\displaystyle\sum_{k=1}^{p-1}p^{n}\xi_p^{k(a-b)} + (p(\# D_{f,a}+\# D_{f,b})-p^n)\F_{p^n} +\displaystyle \epsilon \sqrt{p^*}^{n+s}\sum_{\substack{k,t, k+t\neq 0\\ k^{1-l}+t^{1-l}=v^{1-l}}}    \eta_0^{n+s}(tkv)  \xi_p^{-at-bk} \sL_v.
\end{array} 
\end{displaymath}
which completes the proof.
%where we used Lemma \ref{LemmaPott} in the third equality.Hence, the proof is   complete.
\end{proof}
We can derive from
Lemma \ref{Lemma2Pott} that 
 for $a,b\in\F_p$,  
\be\label{Combinations}
p^2D_{f,a}D_{f,b}=c+\sum_{j=0}^{p-1}d_jD_{f,j},
\ee
where  $c=p^n(p\delta_{a-b}-1)$ with $\delta$ the Kronecker symbol, and 
\be\nn
d_j=p(\# D_{f,a}+\# D_{f,b})-p^n + \epsilon\sqrt{p^*}^{n+s} \displaystyle \sum_{\substack{k,t=1, k+t\neq 0\\ k^{1-l}+t^{1-l}=v^{1-l}}}^{p-1}    \eta_0^{n+s}(tkv)  \xi_p^{vj-at-bk},
\ee
which implies that $D_{f,a}D_{f,b}$ can be written as a linear combinations of the $D_{f,j}$'s.
%
%\chi_\beta(\sL_t)=\sum_{j=0}^{p-1}\chi_\beta(D_{f,j})\xi_p^{jt}=
%\sigma_t(\Wa {f}(t^{-1}\beta))=
%\left\{ \begin{array}{ll}
%0,& \mbox{ if }  \beta\in\F_{p^n}\setminus\Supp(\Wa f),\\
%\epsilon\eta_0^{n+s}(t) \sqrt{p^*}^{n+s} \xi_p^{t^{1-l}g(\beta)},& \mbox{ if }  \beta\in\Supp(\Wa f),
We now show that $\{0\}, D_{f,0}\setminus \{0\},D_{f,1}, D_{f,2},\ldots,D_{f,p-1}$ span a Schur ring. 
  
\begin{theorem}\label{Schur}
Let $f\in WRP$ and $l$ be given by Proposition \ref{Proposition5}. We define the subsets $A_{f,0}=\{0\}$ and $A_{f,j}=\{ x\in\F_{p^n}^{\star}: f(x)=j-1\}$ for $1\leq j\leq p$.  Then $A_{f,0},A_{f,1},\ldots,A_{f,p}$ span   a Schur ring.
\end{theorem}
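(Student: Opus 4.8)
The plan is to verify the four defining axioms of a Schur ring for the family $A_{f,0}, A_{f,1}, \ldots, A_{f,p}$ directly, leaning on the computations already assembled in Lemmas~\ref{LemmaAsso} and \ref{Lemma2Pott} and on equation~(\ref{Combinations}). The first three axioms are essentially immediate. Since $A_{f,0}=\{0\}=\{1_G\}$, the first axiom holds. For the partition axiom, note that $A_{f,j}=D_{f,j-1}\setminus\{0\}$ for $1\le j\le p$; because the $D_{f,j}$'s partition $\F_{p^n}$ and $0\in D_{f,0}$ (as $f(0)=0$ for $f\in WRP$), the sets $A_{f,0},\ldots,A_{f,p}$ partition $\F_{p^n}=G$, and they are pairwise disjoint. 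For the inverse-closure axiom I would use that $D_{f,j}^{(-1)}=D_{f,j}$ for every $j$ (recorded in the excerpt just before~(\ref{DD})); removing the fixed point $0$ preserves this, so $A_{f,j}^{(-1)}=A_{f,j}$, i.e.\ each $A_{f,j}$ is symmetric and $j'=j$ works.

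The only substantive axiom is the last one: that each product $A_{f,i}A_{f,j}$ lies in the $\Z$-span of the $A_{f,k}$'s with integer structure constants. Here I would translate everything through the relation $A_{f,j}=D_{f,j-1}-A_{f,0}$ (as group-ring elements, writing $A_{f,0}=0=1_G$). First, handle products involving $A_{f,0}$: since $A_{f,0}=0$ is the group identity, $A_{f,0}A_{f,j}=A_{f,j}$, which is trivially in the span. For $i,j\in\{1,\ldots,p\}$, expand
\[
A_{f,i}A_{f,j}=(D_{f,i-1}-0)(D_{f,j-1}-0)=D_{f,i-1}D_{f,j-1}-0,
\]
so it suffices to show $D_{f,a}D_{f,b}$ is an integral linear combination of $0, D_{f,0},\ldots,D_{f,p-1}$, equivalently of $A_{f,0},\ldots,A_{f,p}$. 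By~(\ref{Combinations}) we have $p^2 D_{f,a}D_{f,b}=c+\sum_{j=0}^{p-1}d_j D_{f,j}$ with $c=p^n(p\delta_{a-b}-1)\in\Z$ and $d_j$ as displayed in Lemma~\ref{Lemma2Pott}. So the real work is to argue that (i) the $d_j$ are in fact rational integers, not merely algebraic integers in $\Z[\xi_p]$, and (ii) after dividing by $p^2$ the resulting coefficients are still integers.

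For point~(i) I would invoke the standard symmetry argument: $D_{f,a}D_{f,b}$ is a group-ring element with integer coefficients, hence its expansion in the fixed $\Z$-basis $\{0,D_{f,0}\setminus\{0\},D_{f,1},\ldots,D_{f,p-1}\}$ (which spans the group-ring submodule of class functions constant on the $D_{f,j}$'s, by~(\ref{Combinations})) has integer coefficients; the $\xi_p$-sums appearing in $d_j$ must therefore collapse to integers once one accounts for the action of $\mathrm{Gal}(\Q(\xi_p)/\Q)$, which permutes the index pairs $(t,k)\mapsto(\sigma t,\sigma k)$ in the sum for $d_j$ while correspondingly permuting the $v$'s and fixing $\eta_0^{n+s}(tkv)$ (because $\eta_0$ is a character of order $2$ and $n+s$ is even makes $\eta_0^{n+s}$ trivial, or order $2$ and Galois-stable on squares; this is exactly the role of the homogeneity hypotheses in $WRP$ via Proposition~\ref{Proposition5}). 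For point~(ii) I would argue that since $D_{f,a}D_{f,b}$ already has integer coefficients as a group-ring element, writing it in the $D_{f,j}$-basis automatically yields integer structure constants $p_{ij}^k$ — the division by $p^2$ in~(\ref{Combinations}) is consistent precisely because the left-hand side was $p^2 D_{f,a}D_{f,b}$. The main obstacle is thus the careful bookkeeping in point~(i): making the Galois-invariance argument fully rigorous so that each coefficient sum $\sum_{k,t}\eta_0^{n+s}(tkv)\xi_p^{vj-at-bk}$ is visibly a rational integer. This is where the parity of $n+s$ and the constraint $\gcd(l-1,p-1)=1$ (so that $k\mapsto k^{1-l}$ is a bijection of $\F_p^\star$ and $v$ is well defined) are used, and I would phrase it by grouping the sum according to the value of $v$ and noting that each such group is Galois-stable.
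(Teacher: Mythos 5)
Your proposal is correct and follows essentially the same route as the paper's proof: both reduce the substantive axiom to (\ref{Combinations}) and deduce integrality of the structure constants from the fact that $A_{f,a}A_{f,b}$ is a group-ring element with integer coefficients while the $A_{f,j}$'s are disjoint. Your Galois-invariance digression is superfluous (and note that Theorem~\ref{Schur} does not assume $n+s$ even), since the disjointness argument you state first already forces the coefficients to be rational integers.
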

\begin{proof}
Notice that $A_{f,1}=D_{f,0}\setminus \{0\}$ and $A_{f,j}=D_{f,j-1}$  for $j\in\{2,\ldots, p\}$.  We then conclude from   (\ref{Combinations}) that $A_{f,a}A_{f,b}$, for $a,b\in\F_p$, can be represented as linear combinations of the $A_{f,j}$'s. Notice that  $A_{f,a}A_{f,b}$ has integer coefficients and the $A_{f,j}$'s are disjoint.  Hence, the proof is complete from the definition. 
 \end{proof}
Theorem \ref{Schur} says that the configuration  $\{\F_{p^n}; G_{0},G_{1},\ldots,G_{p}\}$ is an   association scheme of class $p$ on $\F_{p^n}$, where
$$G_{j}=\{(g,h) : gh^{-1}\in A_{f,j}\}$$
for every $j\in\F_{p}$. The scheme is symmetric since $A_{f,j}^{(-1)}=A_{f,j}$ for each $j\in\F_{p}$.

%\begin{example}
% The function $f:\F_{3^4}\to \F_{3}$ defined as $f(x)=\Tr^4(?)$,  where $\F_{3^4}^{\star}=\langle \zeta \rangle$ with $\zeta^4 + 2\zeta^3 + 2=0$,   is the quadratic $2$-plateaued unbalanced  function in the set WRP.
%We define the subsets $A_{f,0}=\{0\}$ and $A_{f,j}=\{ x\in\F_{3^4}^{\star}: f(x)=j-1\}$ for $1\leq j\leq 3$. Then we have
%\end{example}

\subsection{Strongly regular graphs from weakly regular plateaued functions}
In this subsection, we generalize the construction method of strongly regular graphs given by Chee et al.  \cite{chee2011strongly} and by Feng et al. \cite{feng2010partial} from weakly regular bent functions to weakly regular plateaued functions over finite fields of odd characteristic. We  show that for an $s$-plateaued $f\in WRP$ 
  the subsets
\be\label{DifferenceSets}
\begin{array}{ll}
D_{f}&=\{ x\in\F_{p^n}^{\star}: f(x)=0\},\\
D_{f,sq}&=\{ x\in\F_{p^n}^{\star}: f(x)\in SQ\},\\
D_{f,sq,0}&=\{ x\in\F_{p^n}^{\star}: f(x)\in SQ\cup \{0\}\},\\
D_{f,nsq}&=\{ x\in\F_{p^n}^{\star}: f(x)\in NSQ\}
\end{array}
\ee
are partial difference sets in $(\F_{p^n},+)$ when $n+s$ is an even integer. Actually, we have
\be\nn
\begin{array}{ll}
D_{f}=D_{f,0}\setminus \{0\},\;\;
 D_{f,sq}=  \displaystyle\bigcup_{j\in SQ}D_{f,j}, \;\;
D_{f,sq,0}= \displaystyle\bigcup_{j\in SQ}D_{f,j}\cup D_{f} \mbox{ and }
D_{f,nsq}=\displaystyle \bigcup_{j\in NSQ}D_{f,j}
\end{array}
\ee
where  $D_{f,j}$  is defined in (\ref{Subsets})  for every $j\in\F_{p}$.
 %=\{ x\in\F_{p^n}: f(x)=j\}
Clearly, we have $-D_{f}=D_{f}$, $- D_{f,sq}= D_{f,sq}$, $-D_{f,sq,0}=D_{f,sq,0}$ and $-D_{f,nsq}=D_{f,nsq}$ since $f(x)=f(-x)$ for every $x\in\F_{p^n}$.
\begin{lemma}\cite{chee2011strongly}\label{Lemma3Pott} Let $n+s$ be an even integer,  $f\in WRP$ and $l$ be given by Proposition \ref{Proposition5}. Then
we have
\be\nn 
\displaystyle \sum_{\substack{k,t=1\\
k+t\neq 0}}^{p-1} \sL_v
%=\sum_{i=1}^{p-1}(p-2)\sL_i=(p-2)\sum_{i=1}^{p-1}\left(D_{f,0}+\sum_{j=1}^{p-1}D_{f,j}\xi_p^{ji}\right)\\
%&=(p-2) \left((p-1)D_{f,0}-\sum_{j=1}^{p-1}D_{f,j}\right)
=(p-2) \left(pD_{f,0}-\F_{p^n}\right),
\ee
where $v= (k^{1-l}+t^{1-l})^\frac{1}{1-l}$.
\end{lemma}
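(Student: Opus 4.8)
The plan is to compute the left-hand sum $\sum_{\substack{k,t=1, k+t\neq 0}}^{p-1}\sL_v$ by reorganizing the double sum over pairs $(k,t)$ according to the value of $v=(k^{1-l}+t^{1-l})^{1/(1-l)}$, and to show that each nonzero value $v\in\F_p^\star$ is hit exactly $p-2$ times, so that the sum collapses to $(p-2)\sum_{v=1}^{p-1}\sL_v$; then I would rewrite $\sum_{v=1}^{p-1}\sL_v$ in terms of $D_{f,0}$ and $\F_{p^n}$ using the inversion formula (\ref{DD}). The key observation is that the map $(k,t)\mapsto v$ is well defined by part $(i)$ of Lemma \ref{LemmaAsso} precisely because $\gcd(l-1,p-1)=1$, which makes $z\mapsto z^{1-l}$ a bijection of $\F_p^\star$ (and fixes $0$); under the substitution $K=k^{1-l}$, $T=t^{1-l}$, $W=v^{1-l}$ the defining relation becomes the linear equation $K+T=W$ with $K,T\in\F_p^\star$ and $K+T\neq 0$.

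First I would fix $v\in\F_p^\star$, equivalently fix $W=v^{1-l}\in\F_p^\star$, and count the solutions $(K,T)$ of $K+T=W$ with $K,T\in\F_p^\star$. There are $p$ ordered pairs $(K,T)\in\F_p\times\F_p$ with $K+T=W$; we must discard $K=0$ (forcing $T=W\neq 0$) and $T=0$ (forcing $K=W\neq 0$), which removes exactly two of them, leaving $p-2$ solutions. The side condition $k+t\neq 0$ in the statement is equivalent to $K+T\neq 0$, i.e.\ $W\neq 0$, which is automatic here since $v\neq 0$; so every $v\in\F_p^\star$ contributes $\sL_v$ exactly $p-2$ times, and since the pairs with $k+t=0$ are already excluded from the outer sum, every admissible pair $(k,t)$ produces some $v\in\F_p^\star$. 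Hence $\sum_{\substack{k,t=1, k+t\neq 0}}^{p-1}\sL_v=(p-2)\sum_{v=1}^{p-1}\sL_v$.

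Finally I would evaluate $\sum_{v=1}^{p-1}\sL_v$. From (\ref{DD}) with $j=0$ we get $pD_{f,0}=\sum_{t=0}^{p-1}\sL_t=\sL_0+\sum_{t=1}^{p-1}\sL_t$, and since $\sL_0=\F_{p^n}$ this gives $\sum_{t=1}^{p-1}\sL_t=pD_{f,0}-\F_{p^n}$. Substituting yields $\sum_{\substack{k,t=1, k+t\neq 0}}^{p-1}\sL_v=(p-2)(pD_{f,0}-\F_{p^n})$, as claimed; all identities are understood in $\C[(\F_{p^n},+)]$ as remarked after Lemma \ref{LemmaAsso}. The only slightly delicate point is the bookkeeping in the counting step — making sure that the two excluded boundary cases $K=0$ and $T=0$ are genuinely distinct (they coincide only if $W=0$, which does not occur) and that no further pairs are lost — but this is elementary; the hypothesis $n+s$ even is not actually needed for this particular lemma and is carried only for consistency with the surrounding development. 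This is essentially the computation in \cite[proof of the analogous statement]{chee2011strongly}, adapted verbatim since part $(i)$ of Lemma \ref{LemmaAsso} has the same shape as in the bent case.
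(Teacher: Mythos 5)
Your argument is correct, and it is the standard one: the paper itself gives no proof of this lemma (it is quoted from \cite{chee2011strongly}), and your reduction --- the substitution $K=k^{1-l}$, $T=t^{1-l}$ turning the constraint into $K+T=W$ with exactly $p-2$ solutions $(K,T)\in\F_p^\star\times\F_p^\star$ for each $W\in\F_p^\star$, followed by $\sum_{v=1}^{p-1}\sL_v=pD_{f,0}-\F_{p^n}$ from the definition of $\sL_t$ --- is exactly the computation the citation refers to. Two small points worth making explicit: the equivalence $k+t\neq 0\iff K+T\neq 0$ uses that $l$ is \emph{even} (so $1-l$ is odd and $(-t)^{1-l}=-t^{1-l}$), not just $\gcd(l-1,p-1)=1$; and your observation that the hypothesis ``$n+s$ even'' is not needed here is accurate, since the identity is purely a reindexing statement about the $\sL_v$ and never invokes the Walsh spectrum.
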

We now  establish the main results of this subsection, which require to compute the squares $D_{f}^2$, $D_{f,sq}^2$, $D_{f,sq,0}^2$ and $D_{f,nsq}^2$ in $\C[(\F_{p^n},+)]$.
\begin{theorem}\label{Theorem0}
Let $n+s$ be an even integer, $f\in WRP$ and $D_{f}$ defined in (\ref{DifferenceSets}). Then
$D_{f}$ is a $(v,d,\lambda_1,\lambda_2)$-PDS in $(\F_{p^n},+)$, where
\be\nn
\begin{array}{ll}
v&=p^{n},\\
d&=%p^{n-1} +\epsilon(p-1)\eta^{\frac{n+s}{2}}_0(-1)  p^{\frac{n+s-2}{2}}-1 
p^{n-1}+\epsilon \eta_0 (-1)(p-1)\sqrt{p^*}^{n+s-2}-1,\\
\lambda_1&=%p^{n-2}+\epsilon\eta^{\frac{n+s}{2}}_0(-1)(p-1) p^{\frac{n+s-2}{2}}-2
p^{n-2}+\epsilon \eta_0 (-1)(p-1)\sqrt{p^*}^{n+s-2}-2,\\
\lambda_2&=%p^{n-2}+\epsilon\eta^{\frac{n+s}{2}}_0(-1)p^{\frac{n+s-2}{2}}
p^{n-2}+\epsilon \eta_0 (-1) \sqrt{p^*}^{n+s-2}.
\end{array}
\ee

\end{theorem}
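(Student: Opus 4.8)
The plan is to compute the square $D_{f}^2$ in the group ring $\C[(\F_{p^n},+)]$ and show it has the form $(d-\lambda_2)\cdot 0 + (\lambda_1-\lambda_2)D_{f} + \lambda_2\,\F_{p^n}$ required by the defining equation (\ref{Condition}) of a PDS. Since $D_{f} = D_{f,0}\setminus\{0\}$, I would write $D_{f} = D_{f,0} - 0$ and expand $D_{f}^2 = D_{f,0}^2 - 2D_{f,0} + 0$, using $0\cdot D_{f,0} = D_{f,0}$ (translation by the identity). So the whole task reduces to computing $D_{f,0}^2$, which is exactly the case $a=b=0$ of the formula (\ref{Combinations})/Lemma \ref{Lemma2Pott}. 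First I would substitute $a=b=0$ into Lemma \ref{Lemma2Pott}: the first sum $\sum_{k=1}^{p-1}p^n\xi_p^{k(a-b)}$ collapses to $(p-1)p^n$; the coefficient $p(\# D_{f,0}+\# D_{f,0})-p^n = 2p\,\sN_f(0)-p^n$ of $\F_{p^n}$ is explicit via Lemma \ref{Lemma7}; and the troublesome double sum $\sum_{k+t\neq 0}\eta_0^{n+s}(tkv)\xi_p^{-at-bk}\sL_v$ becomes $\sum_{k+t\neq 0}\eta_0^{n+s}(tkv)\sL_v$ since the character values are $1$.

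The key simplification I expect to use for that double sum is that $n+s$ is even, so $\eta_0^{n+s}(tkv) = 1$ identically (as $\eta_0$ is $\pm1$-valued), and the double sum collapses to $\epsilon\sqrt{p^*}^{n+s}\sum_{k,t=1,\,k+t\neq 0}^{p-1}\sL_v$. This is precisely the quantity evaluated in Lemma \ref{Lemma3Pott}, which equals $(p-2)(pD_{f,0}-\F_{p^n})$. Plugging all three pieces back in, $p^2 D_{f,0}^2$ becomes a $\Z$-linear combination of $0$ (the identity), $D_{f,0}$, and $\F_{p^n}$; dividing by $p^2$ and then subtracting $2D_{f,0} - 0$ gives $D_{f}^2$ in the same form. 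I would then read off $\lambda_2$ as the coefficient of $\F_{p^n}$, and $\lambda_1 - \lambda_2$ as the coefficient of $D_{f,0}$ (equivalently of $D_{f}$, since $\F_{p^n}$ absorbs the identity-vs-nonidentity distinction), and the coefficient of $0$ should come out to $d - \lambda_2$ as a consistency check. Throughout, $d = \#D_{f} = \sN_f(0) - 1$ by Lemma \ref{Lemma7}, and $v = p^n$ is the order of the group.

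The main obstacle will be the bookkeeping: correctly tracking how the isolated identity element $0$ is repeatedly absorbed into or extracted from $D_{f,0}$, $\F_{p^n}$, and $\sL_v$ as one passes between $D_{f,0}^2$ and $D_{f}^2$, and making sure the Kronecker-$\delta$ term $c = p^n(p\delta_{a-b}-1)$ with $a=b=0$ (giving $c=p^n(p-1)$) is combined with the $(p-2)(pD_{f,0}-\F_{p^n})$ contribution without sign or off-by-one errors. A secondary point requiring care is verifying that the final coefficients are genuine integers — i.e., that $p^2$ divides the assembled combination — which is guaranteed a priori by the Schur-ring structure of Theorem \ref{Schur} but is worth confirming numerically against the claimed closed forms for $\lambda_1, \lambda_2$. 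Once the algebra is done, it only remains to invoke Proposition 1 (the PDS$\leftrightarrow$SRG correspondence), but here we only assert the PDS property, so citing (\ref{Condition}) with the computed parameters and noting $D_{f}^{(-1)} = -D_{f} = D_{f}$ and $0\notin D_{f}$ finishes the proof.
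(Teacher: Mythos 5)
Your plan is correct and coincides with the paper's own proof: both specialize Lemma \ref{Lemma2Pott} to $a=b=0$, use that $\eta_0^{n+s}\equiv 1$ for $n+s$ even so that Lemma \ref{Lemma3Pott} evaluates the double sum as $(p-2)(pD_{f,0}-\F_{p^n})$, substitute $\#D_{f,0}$ from Lemma \ref{Lemma7}, pass between $D_{f,0}$ and $D_f=D_{f,0}\setminus\{0\}$ in the group ring, and read off the parameters from (\ref{Condition}). The remaining bookkeeping you flag does check out (e.g.\ the identity coefficient comes to $p^{n-2}(p-1)+\epsilon\eta_0(-1)(p-2)\sqrt{p^*}^{n+s-2}-1=d-\lambda_2$), so nothing further is needed.
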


\begin{proof}
%Recall that $\sL_t=\sum_{j=0}^{p-1}D_{f,j}\xi_p^{jt} \mbox{ for } t\in\F_p^{\star}.$
By Lemma  \ref{Lemma2Pott}, for $a=b=0$, we clearly  have
\be\nn
\begin{array}{ll}
p^2D_{f,0}D_{f,0}&=(p-1)p^n+
(2p\# D_{f,0}-p^n)\F_{p^n}+  \epsilon \sqrt{p^*}^{n+s}
\displaystyle \sum_{\substack{k,t, k+t\neq 0\\ k^{1-l}+t^{1-l}=v^{1-l}}} \sL_v\\
&=(p-1)p^n+\epsilon(p-2)p  \sqrt{p^*}^{n+s}D_{f,0} + 
(2p\# D_{f,0}-p^n-\epsilon \sqrt{p^*}^{n+s}(p-2))\F_{p^n},
\end{array}
\ee
where we used    Lemma \ref{Lemma3Pott} in the last equality. Recall that  $\#D_{f,0}=p^{n-1}+\epsilon \eta_0 (-1)(p-1)\sqrt{p^*}^{n+s-2}$ by Lemma \ref{Lemma7} and  $D_{f}=D_{f,0}\setminus  \{0\}$.  Then we get
\be\nn
\begin{array}{ll}
p^2(D_{f}+0)^2&
=(p-1)p^n+\epsilon(p-2)p  \sqrt{p^*}^{n+s}(D_{f}+0) \\
&+(2p^{n}+\epsilon 2\eta_0 (-1)p(p-1)\sqrt{p^*}^{n+s-2}-p^n-\epsilon \sqrt{p^*}^{n+s}(p-2))\F_{p^n}.
\end{array}
\ee
With the simple computation, we obtain the  following equation 
 \be\nn
\begin{array}{ll}
D_{f}^2&
=\left(p^{n-2}(p-1)+\epsilon \eta^{\frac{n+s}{2}}_0(-1)  (p-2)p^{\frac{n+s-2}{2}}-1\right)\\
&+\left(\epsilon\eta^{\frac{n+s}{2}}_0(-1)(p-2) p^{\frac{n+s-2}{2}}-2\right)D_{f}
+\left(p^{n-2}+\epsilon\eta^{\frac{n+s}{2}}_0(-1)p^{\frac{n+s-2}{2}}\right)\F_{p^n},
\end{array}
\ee
from which one can deduce  the desired parameters by  (\ref{Condition}).
\end{proof}
  
The following lemma follows directly from  \cite[Lemma 2 and the proof of Theorem 2]{chee2011strongly}. % in the case of  $p\equiv 1\pmod 4$, but  it can be given in the case of  $p\equiv 3\pmod 4$ in a similar way.
 \begin{lemma}\label{LemmaPott} 
Let $n+s$ be an even integer,   $f\in WRP$ and $l$ be given by Proposition \ref{Proposition5}. Then we have
\be\nn
\displaystyle\sum_{a,b\in SQ}\sum_{
\substack{k,t=1 \\
k+t\neq 0 }}^{p-1}  \xi_p^{-at-bk} \sL_v=pD_{f,sq}-\frac{(p-1)}{2}\F_{p^n},
\ee
where $v=(k^{1-l}+t^{1-l})^{\frac{1}{1-l}}$. % and $D_{f,sq}$  is defined by (\ref{DifferenceSets}).
\end{lemma}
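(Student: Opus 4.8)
The plan is to mirror the computation in \cite[Theorem~2]{chee2011strongly}, carrying the extra factor coming from the plateaued (rather than bent) Walsh spectrum. First I would expand the left-hand side using the group-ring identity $D_{f,sq}=\bigcup_{j\in SQ}D_{f,j}$, so that
\[
\Big(\sum_{b\in SQ}D_{f,b}\Big)\Big(\sum_{a\in SQ}D_{f,a}\Big)=\sum_{a,b\in SQ}D_{f,a}D_{f,b},
\]
and then substitute $D_{f,j}=\tfrac1p\sum_{t}\sL_t\xi_p^{-jt}$ from (\ref{DD}) into each factor. This produces a quadruple sum over $a,b\in SQ$ and $t,k\in\F_p$, which I would split exactly as in the proof of Lemma~\ref{Lemma2Pott}: the $t=k=0$ term contributes a multiple of $\F_{p^n}$; the diagonal terms $k+t=0$ contribute multiples of $\sL_t\sL_{-t}=p^n$ (by Lemma~\ref{LemmaAsso}(ii)), hence again multiples of $\F_{p^n}$ after summing the relevant character sums $\sum_{a,b\in SQ}\xi_p^{-t(a-b)}$; the mixed terms with exactly one of $t,k$ zero contribute multiples of $\sL_0\sL_t=\F_{p^n}\,\sL_t$-type expressions handled by Lemma~\ref{LemmaAsso}(iii); and the generic terms $k,t\ne 0$, $k+t\ne 0$ give $\epsilon\eta_0^{n+s}(tkv)\sqrt{p^*}^{n+s}\sum_{a,b\in SQ}\xi_p^{-at-bk}\sL_v$ by Lemma~\ref{LemmaAsso}(i). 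The target identity in the statement is precisely the evaluation of this last generic piece.

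The key step is therefore to evaluate
\[
\sum_{a,b\in SQ}\ \sum_{\substack{k,t=1\\ k+t\ne 0}}^{p-1}\xi_p^{-at-bk}\,\sL_v,\qquad v=(k^{1-l}+t^{1-l})^{\frac{1}{1-l}}.
\]
The approach is to reparametrize the inner sum. For fixed nonzero $v$, I would count, for each value of $v$, the pairs $(k,t)$ with $k,t\ne 0$, $k+t\ne 0$ and $k^{1-l}+t^{1-l}=v^{1-l}$, weighted by $\sum_{a,b\in SQ}\xi_p^{-at-bk}$. Writing $k=vk'$, $t=vt'$ (legitimate since $v\ne 0$ and $x\mapsto x^{1-l}$ is a bijection of $\F_p^\star$ as $\gcd(l-1,p-1)=1$), the constraint becomes $k'^{1-l}+t'^{1-l}=1$, which is independent of $v$; the character weight becomes $\sum_{a,b\in SQ}\xi_p^{-v(ak'+bt')}$. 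Summing over all $v\in\F_p^\star$ and using the standard Gauss-sum evaluation of $\sum_{v\in\F_p^\star}\xi_p^{cv}$ (namely $p-1$ if $c=0$ and $-1$ otherwise, together with the fact that $\sum_{a\in SQ}\xi_p^{ca}$ for $c\ne 0$ is $\tfrac{-1\pm\sqrt{p^*}}{2}$ depending on whether $c\in SQ$ or $NSQ$), the whole expression collapses to a linear combination of $\F_{p^n}$, $D_{f,sq}$ and $D_{f,nsq}$. Matching coefficients — and exploiting the symmetry that, as $(k',t')$ ranges over solutions of $k'^{1-l}+t'^{1-l}=1$, the quantity $ak'+bt'$ with $a,b\in SQ$ distributes evenly over $SQ$ and $NSQ$ except for a controlled number of "collapsed" solutions — yields the stated $pD_{f,sq}-\tfrac{p-1}{2}\F_{p^n}$.

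I expect the main obstacle to be the bookkeeping in that last character-sum collapse: one must carefully track how many solution pairs $(k',t')$ of $k'^{1-l}+t'^{1-l}=1$ give $ak'+bt'=0$ versus $ak'+bt'\in SQ$ versus $NSQ$ as $a,b$ themselves range over $SQ$, and show the $D_{f,nsq}$-coefficient vanishes while the $D_{f,sq}$- and $\F_{p^n}$-coefficients come out as claimed. Rather than redo this combinatorial count from scratch, I would invoke \cite[Lemma~2 and the proof of Theorem~2]{chee2011strongly}, where exactly the same count appears for weakly regular bent functions: the only change here is the global factor $\sqrt{p^*}^{n+s}$ in place of $\sqrt{p^*}^{n}$ and the appearance of $\eta_0^{n+s}$ in place of $\eta_0^n$, neither of which affects the combinatorial identity being used — it is a statement about the $\sL_v$'s and character sums over $SQ$, not about the specific normalization of the Walsh transform. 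Hence the bent-case lemma transfers verbatim, and the proof is complete once the reparametrization above is recorded.
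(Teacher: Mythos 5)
Your proposal is correct and ends up exactly where the paper does: the paper gives no proof of this lemma at all, stating only that it ``follows directly from [Lemma 2 and the proof of Theorem 2] of Chee et al.'', and your final step is that same citation, justified by the (valid) observation that the identity concerns only the group-ring elements $\sL_v$ and the exponent $l$, not the Walsh normalization. Your extra sketch — substituting $k=vk'$, $t=vt'$ to reduce to solutions of $k'^{1-l}+t'^{1-l}=1$ and then expanding $\sL_v=\sum_j D_{f,j}\xi_p^{jv}$ to collapse the $v$-sum — is a sound (and more detailed) account of what that citation hides, so no correction is needed.
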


\begin{theorem}\label{TheoremSQ}
Let $n+s$ be an even integer, $f\in WRP$ and $D_{f,sq}$ defined by (\ref{DifferenceSets}). Then
$D_{f,sq}$ is a $(v,d,\lambda_1,\lambda_2)$-PDS in $(\F_{p^n},+)$, where
\be\nn
\begin{array}{ll}
v&=p^{n},\\
d&=\frac{(p-1)}{2}\left(p^{n-1}  -\epsilon\eta_0(-1)\sqrt{p^*}^{n+s-2}\right),\\
\lambda_1&=\frac{1}{4}p^{n-2}(p-1)^2  -\epsilon\eta_0(-1)\frac{1}{2} (p-3)  \sqrt{p^*}^{n+s-2},\\
\lambda_2&=\frac{1}{4}p^{n-2}(p-1)^2  - \epsilon\eta_0(-1)\frac{1}{2} (p-1)  \sqrt{p^*}^{n+s-2}.
\end{array}
\ee

\end{theorem}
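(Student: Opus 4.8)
The plan is to mirror the proof of Theorem \ref{Theorem0}, computing the square $D_{f,sq}^2$ in the group ring $\C[(\F_{p^n},+)]$ and then reading off the PDS parameters via (\ref{Condition}). First I would write $D_{f,sq}=\sum_{b\in SQ}D_{f,b}$, so that
\be\nn
D_{f,sq}^2=\sum_{a,b\in SQ}D_{f,a}D_{f,b},
\ee
and then multiply the expansion of $p^2 D_{f,a}D_{f,b}$ furnished by Lemma \ref{Lemma2Pott} by $p^2$ and sum over $a,b\in SQ$. Since $|SQ|=(p-1)/2$, the term $\sum_{k=1}^{p-1}p^n\xi_p^{k(a-b)}$ contributes $p^n$ times $\sum_{a,b\in SQ}(p\delta_{a-b}-1)$; a short character computation gives $\sum_{a,b\in SQ}\delta_{a-b}=(p-1)/2$ and $\sum_{a,b\in SQ}1=(p-1)^2/4$, producing the constant term of the identity $1_G$. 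The term $(p(\#D_{f,a}+\#D_{f,b})-p^n)\F_{p^n}$ summed over the $(p-1)^2/4$ pairs, using $\#D_{f,b}=p^{n-1}-\epsilon\eta_0(-1)\sqrt{p^*}^{n+s-2}$ from Lemma \ref{Lemma7} (valid since $a,b\in SQ\subseteq\F_p^\star$), contributes a known multiple of $\F_{p^n}$.

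The crux is the third, ``exotic'' term $\epsilon\sqrt{p^*}^{n+s}\sum_{a,b\in SQ}\sum_{k,t=1,\,k+t\neq 0}^{p-1}\eta_0^{n+s}(tkv)\,\xi_p^{-at-bk}\,\sL_v$. Here I would invoke Lemma \ref{LemmaPott}, which is precisely the statement that
\be\nn
\sum_{a,b\in SQ}\sum_{\substack{k,t=1\\k+t\neq 0}}^{p-1}\xi_p^{-at-bk}\sL_v=pD_{f,sq}-\tfrac{p-1}{2}\F_{p^n},
\ee
after observing that $n+s$ even forces $\eta_0^{n+s}(tkv)=1$, so the character factor disappears. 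Substituting this in converts the exotic term into $\epsilon\sqrt{p^*}^{n+s}\bigl(pD_{f,sq}-\tfrac{p-1}{2}\F_{p^n}\bigr)$, i.e. a multiple of $D_{f,sq}$ plus a multiple of $\F_{p^n}$. Then I would also have to handle $D_{f,sq}=D_{f,sq}$ (no correction for $0$ is needed here, since $0\notin D_{f,j}$ unless $j=f(0)=0$, and $0\notin SQ$), so no analogue of the ``$+0$'' bookkeeping from Theorem \ref{Theorem0} arises.

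Collecting the three contributions yields an identity of the shape $p^4 D_{f,sq}^2 = c' 1_G + d' D_{f,sq} + e'\F_{p^n}$; dividing by $p^4$ and simplifying powers (using $\sqrt{p^*}^{n+s}=\eta_0(-1)^{(n+s)/2}p^{(n+s)/2}$ and $\eta_0(-1)=\eta_0(-1)^{-1}$ to absorb the sign) gives $D_{f,sq}^2=(d-\mu)1_G+(\lambda_1-\lambda_2)D_{f,sq}+\lambda_2\F_{p^n}$, and matching coefficients with (\ref{Condition}) produces the claimed $v,d,\lambda_1,\lambda_2$. One should double-check that $D_{f,sq}^{(-1)}=D_{f,sq}$ (already noted, since $f(-x)=f(x)$) and that $1_G=0\notin D_{f,sq}$, so that Proposition 1 applies. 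The main obstacle I anticipate is purely bookkeeping: keeping the several $\F_{p^n}$-coefficients, the $\epsilon$ signs, and the half-integer factors $(p-1)/2$, $(p-1)^2/4$ straight through the algebra, and verifying at the end that $\lambda_1-\lambda_2=-\epsilon\eta_0(-1)\sqrt{p^*}^{n+s-2}\ne 0$ so that the resulting Cayley graph is genuinely strongly regular rather than a disjoint union of complete graphs.
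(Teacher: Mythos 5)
Your proposal is correct and follows essentially the same route as the paper's proof: expand $p^2D_{f,sq}^2=\sum_{a,b\in SQ}p^2D_{f,a}D_{f,b}$ via Lemma \ref{Lemma2Pott}, evaluate the two elementary sums over $a,b\in SQ$ directly, handle the remaining term with Lemma \ref{LemmaPott} (using that $n+s$ even makes $\eta_0^{n+s}(tkv)=1$), and read off the parameters from (\ref{Condition}). Two trivial slips to fix in the write-up: you only need $p^2D_{f,sq}^2$, not $p^4D_{f,sq}^2$ (the extra multiplication by $p^2$ is redundant), and the final sanity check should read $\lambda_1-\lambda_2=+\epsilon\eta_0(-1)\sqrt{p^*}^{n+s-2}\neq 0$ rather than with a minus sign.
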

\begin{proof}
%We only prove the case  $p\equiv 3\pmod 4$.
 By Lemma \ref{Lemma2Pott}, by noting that $n+s$ is even,
 we get that 
\begin{displaymath}\label{mm}
\begin{array}{ll}
p^2D_{f,sq}^2 &=  p^2\displaystyle\sum_{a,b\in SQ} D_{f,a}D_{f,b}\\
&=\displaystyle \sum_{a,b\in SQ}  \left(
\displaystyle\sum_{k=1}^{p-1} p^{n } \xi_p^{k(a-b)}+ (2p\# D_{f,1}-p^n)\F_{p^n}+  \epsilon   \sqrt{p^*}^{n+s} \displaystyle\sum_{\substack{k,t=1 \\k+t\neq 0
 }}^{p-1}   \xi_p^{-at-bk} \sL_v\right),
\end{array} 
\end{displaymath}
where $v=(k^{1-l}+t^{1-l})^{\frac{1}{1-l}}$.
One can easily observe that 
\be\label{AA}
\begin{array}{ll}
\displaystyle \sum_{a,b\in SQ}\sum_{k=1}^{p-1} p^{n } \xi_p^{k(a-b)}&=\frac{p^{n }(p-1)(p+1)}{4},\\
\displaystyle \sum_{a,b\in SQ} (2p\# D_{f,1}-p^n)\F_{p^n}&=\left(\frac{p^n(p-1)^2}{4}- \epsilon \frac{(p-1)^2}{2} \sqrt{p^*}^{n+s} \right)\F_{p^n},
\end{array}
\ee
where $\# D_{f,1}=p^{n-1}-\epsilon \eta_0 (-1)\sqrt{p^*}^{n+s-2}$ by Lemma \ref{Lemma7}.
Hence, combining Lemma \ref{LemmaPott} and the above results in (\ref{AA}), we get   
\begin{displaymath}\nn
\begin{array}{ll}
 p^2D_{f,sq}^2& = \frac{p^{n}(p-1)(p+1)}{4}+\left(\frac{p^n(p-1)^2}{4}- \epsilon \frac{(p-1)^2}{2} \sqrt{p^*}^{n+s} \right)\F_{p^n}+ \epsilon \sqrt{p^*}^{n+s}\left(pD_{f,sq}-\frac{p-1}{2}\F_{p^n}\right)\\
&=C_1 + C_2 D_{f,sq} + C_3 \F_{p^n},
\end{array} 
\end{displaymath}
where 
$
C_1= \frac{1}{4}p^{n}(p-1)(p+1), 
C_2=\epsilon p\sqrt{p^*}^{n+s}$ and $
C_3= \frac{1}{4}p^n(p-1)^2  - \epsilon\frac{1}{2}p(p-1)  \sqrt{p^*}^{n+s}$
from which one can deduce  the desired parameters by  (\ref{Condition}).
\end{proof}

\begin{remark}\label{TheoremNSQ}  With the same proof of  Theorem \ref{TheoremSQ}, one can observe that 
$D_{f,nsq}$  defined in (\ref{DifferenceSets}) is a $(v,d,\lambda_1,\lambda_2)$-PDS in $(\F_{p^n},+)$ with the same parameters of Theorem \ref{TheoremSQ}.
\end{remark}

%\begin{remark} Notice that 
%Theorem \ref{TheoremSQ} may be seen as a generalization of \cite[Theorem 1]{tan2010strongly} from weakly  regular ternary bent functions to weakly regular $p$-ary plateaued functions for any odd prime $p$. 
%\end{remark}
The following theorem follows directly from Theorem \ref{TheoremSQ} with a  slight modification.

\begin{theorem}\label{TheoremSQ0}
Let $n+s$ be an even integer, $f\in WRP$ and $D_{f,sq,0}$  defined in (\ref{DifferenceSets}). Then
$D_{f,sq,0}$ is a $(v,d,\lambda_1,\lambda_2)$-PDS in $(\F_{p^n},+)$, where  
\be\nn
\begin{array}{ll}
v&=p^{n},\\
d&= \frac{(p+1)}{2} p^{n-1} +\epsilon\eta_0(-1)\frac{(p-1)}{2} \sqrt{p^*}^{n+s-2}-1,\\
\lambda_1&=\frac{1}{4}p^{n-2}(p-1)^2 -2 + \epsilon\eta_0(-1)\frac{(p-1) }{2} \sqrt{p^*}^{n+s-2} ,\\
\lambda_2&=\frac{1}{4}p^{n-2}(p-1)^2 + \epsilon\eta_0(-1)\frac{(p-1) }{2} \sqrt{p^*}^{n+s-2}.
\end{array}
\ee
\end{theorem}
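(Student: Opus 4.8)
\textbf{Proof proposal for Theorem \ref{TheoremSQ0}.}

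The plan is to mimic the proof of Theorem \ref{TheoremSQ} almost verbatim, only adjusting for the presence of the extra summand $D_f$ inside $D_{f,sq,0}$. Observe first that $D_{f,sq,0}=D_{f,sq}+D_f=D_{f,sq}+D_{f,0}-0$ in $\C[(\F_{p^n},+)]$, so that
\be\nn
D_{f,sq,0}^2=(D_{f,sq}+D_{f,0})^2-2(D_{f,sq}+D_{f,0})+0=D_{f,sq}^2+2D_{f,sq}D_{f,0}+D_{f,0}^2-2D_{f,sq,0}-2\cdot 0.
\ee
Hence the computation reduces to three pieces: $D_{f,sq}^2$ is already given by the displayed identity in the proof of Theorem \ref{TheoremSQ}; $D_{f,0}^2$ is the identity established inside the proof of Theorem \ref{Theorem0} (namely $p^2D_{f,0}^2=(p-1)p^n+\epsilon(p-2)p\sqrt{p^*}^{n+s}D_{f,0}+(2p\#D_{f,0}-p^n-\epsilon\sqrt{p^*}^{n+s}(p-2))\F_{p^n}$); and the cross term $D_{f,sq}D_{f,0}=\sum_{a\in SQ}D_{f,a}D_{f,0}$, which I would expand via Lemma \ref{Lemma2Pott} with $b=0$, summing over $a\in SQ$.

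For the cross term, applying Lemma \ref{Lemma2Pott} with $b=0$ and summing over the $(p-1)/2$ values $a\in SQ$ gives three contributions. The geometric-sum term $\sum_{a\in SQ}\sum_{k=1}^{p-1}p^n\xi_p^{-ak}$ collapses using $\sum_{k=1}^{p-1}\xi_p^{-ak}=-1$, yielding $-\tfrac{p-1}{2}p^n$. The $\F_{p^n}$-coefficient term $\sum_{a\in SQ}(p(\#D_{f,a}+\#D_{f,0})-p^n)$ is evaluated from Lemma \ref{Lemma7}. The interesting part is the $\sL_v$-term $\epsilon\sqrt{p^*}^{n+s}\sum_{a\in SQ}\sum_{k,t=1,\,k+t\ne 0}^{p-1}\eta_0^{n+s}(tkv)\xi_p^{-at}\sL_v$; since $n+s$ is even this $\eta_0$-factor is $1$, and the sum over $a\in SQ$ of $\xi_p^{-at}$ for fixed $t\ne 0$ is a Gauss-type sum $\sum_{a\in SQ}\xi_p^{-at}=\tfrac12(\eta_0(-t)\sqrt{p^*}-1)$. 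Feeding this in and using Lemma \ref{Lemma3Pott} (which controls $\sum_{k+t\ne0}\sL_v$) together with a variant of Lemma \ref{LemmaPott} (which controls $\sum_{k+t\ne0}\eta_0(t)\sL_v$, arising when one uses that the $\eta_0(-t)$-weighted sum is essentially the single-square sum of Lemma \ref{LemmaPott} — indeed $\sum_{a\in SQ,b\in SQ}$ in Lemma \ref{LemmaPott} already packages the relevant object, and the one-index version can be extracted the same way as in \cite{chee2011strongly}) should express everything as a linear combination $c'+\sum_j d_j D_{f,j}$ that collapses, after grouping $SQ$-indexed and $\{0\}$-indexed terms, into the shape $c+C\,D_{f,sq,0}+C'\F_{p^n}$.

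Summing the three pieces $D_{f,sq}^2$, $2D_{f,sq}D_{f,0}$, $D_{f,0}^2$ and subtracting $2D_{f,sq,0}+2\cdot 0$ as above, I expect all terms proportional to $D_{f,sq}$ alone and to $D_{f,0}$ alone to recombine into a single multiple of $D_{f,sq,0}=D_{f,sq}+D_{f,0}-0$ (this is forced by the general principle behind (\ref{Combinations}): any product of the $D_{f,j}$'s lies in the Schur ring, and $D_{f,sq,0}$ is $NSQ$-invariant under the $\F_p^\star$-action of Lemma \ref{Walshsupport}, so its square must be a combination of $0$, $D_{f,sq,0}$ and $\F_{p^n}$). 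Matching the resulting identity $D_{f,sq,0}^2=(d-\mu)\cdot 0+(\lambda-\mu)D_{f,sq,0}+\mu\F_{p^n}$ against (\ref{Condition}) then reads off $d,\lambda_1,\lambda_2$; the $-1$ in $d$ and the $-2$ in $\lambda_1$ are exactly the bookkeeping corrections from excising the identity element $0$, just as in Theorem \ref{Theorem0}. The main obstacle is purely bookkeeping: correctly extracting the one-index square-sum analogue of Lemma \ref{LemmaPott} and keeping the $\eta_0(-1)=\eta_0$ signs and the $\sqrt{p^*}$ versus $\sqrt{p^*}^{\,n+s}$ normalizations straight through the cross term, so that the final coefficients simplify to those claimed; I would double-check the answer against the known weakly regular bent case $s=0$ in \cite{chee2011strongly}.
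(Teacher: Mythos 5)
Your proposal is correct and follows essentially the same route as the paper, which simply writes $D_{f,sq,0}^2=(D_{f}+D_{f,sq})^2$ and appeals to Theorems \ref{Theorem0} and \ref{TheoremSQ} "with the same group ring computations"; you spell out the one ingredient the paper leaves implicit, namely the cross term $D_{f,sq}D_{f,0}$ via Lemma \ref{Lemma2Pott} with $b=0$ and a one-index analogue of Lemma \ref{LemmaPott}. Only a trivial bookkeeping slip: in $-2(D_{f,sq}+D_{f,0})+0=-2D_{f,sq,0}-2\cdot 0+0$ the identity element should end up with coefficient $-1$, not $-2$.
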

\begin{proof}
We   have clearly that $D_{f,sq,0}=D_{f,sq}\cup  D_{f} $ from  (\ref{DifferenceSets}). Then 
$$D_{f,sq,0}^2=(D_{f} + D_{f,sq})(D_{f} + D_{f,sq}).$$ Hence, the desired parameters follow from Theorems \ref{Theorem0} and \ref{TheoremSQ} with the same group ring computations.
\end{proof}

%We end this subsection by introducing a $3$-class association scheme from a weakly regular plateaued function. Notice that  the constructed  PDSs  of Theorems \ref{Theorem0}, \ref{TheoremSQ} and Remark \ref{TheoremNSQ}  are the  Cayley graphs $Cay(G, D_f)$, $Cay(G, D_{f,sq})$ and $Cay(G, D_{f,nsq})$ respectively, where $G=(\F_{p^n},+)$. 
%Hence, in the light of Proposition \ref{amorphic}, they lead to a $3$-class  amorphic association scheme on $G$.
%
%\begin{theorem}
%Let $n+s$ be an even integer, $f\in WRP$ and $D_{f}$, $D_{f,sq}$ and  $D_{f,nsq}$  defined in (\ref{DifferenceSets}). Then, $\{\F_{p^n}; \{0\}, D_{f},D_{f,sq},D_{f,nsq}\}$ is an amorphic association scheme of class $3$.
%\end{theorem}

\section{Conclusion}
In this paper, inspired by the works of \cite{chee2011strongly,pott2011association}, to construct association schemes and strongly regular graphs we push further the use of weakly regular plateaued functions over finite fields of odd characteristic  introduced by Mesnager et al.    \cite{mesnager2017WCC}.  By generalizing the construction methods given in \cite{chee2011strongly,pott2011association}, we obtain  a family of  $p$-class association schemes and  strongly regular graphs with three types of parameters  from weakly regular $p$-ary plateaued functions, where $p$ is an odd prime.
The paper provides the first constructions of  association schemes and strongly regular graphs from weakly regular plateaued functions over finite fields of odd characteristic.

%To the best of our knowledge, they are  inequivalent to the known ones (since there is no minimal linear code with these parameters) in the literature.

\bibliography{myBiblio}
\bibliographystyle{abbrv}
\bibliographystyle{iamBiblioStyle}

\end{document}